\newcommand{\gettikzxy}[3]{%
  \tikz@scan@one@point\pgfutil@firstofone#1\relax
  \edef#2{\the\pgf@x}%
  \edef#3{\the\pgf@y}%
}
\newcommand\numberthis{\addtocounter{equation}{1}\tag{\theequation}}
\newcommand{\appropto}{\mathrel{\vcenter{
  \offinterlineskip\halign{\hfil$##$\cr
    \propto\cr\noalign{\kern2pt}\sim\cr\noalign{\kern-2pt}}}}}
\newtheorem{proposition}{Proposition}
\acrodef{BS}{base station}
\acrodef{CSI}{channel state information}
\acrodef{DFT}{discrete Fourier transform}
\acrodef{UE}{user equipment}
\acrodef{ISAC}{integrated sensing and communication}
\acrodef{ISLAC}{integrated sensing, localization, and communication}
\acrodef{LOS}{line-of-sight}
\acrodef{RIS}{reconfigurable intelligent surface}
\pgfplotsset{compat=1.16}
\title{Two-Timescale Transmission Design and RIS Optimization for Integrated Localization and Communications}
\author{Fan Jiang, \IEEEmembership{Member, IEEE},  Andrea Abrardo, \IEEEmembership{Senior Member, IEEE}, Kamran Keykhoshravi, \IEEEmembership{Member, IEEE}, Henk Wymeersch, \IEEEmembership{Senior Member, IEEE}, Davide Dardari, \IEEEmembership{Senior Member, IEEE}, Marco Di Renzo, \IEEEmembership{Fellow, IEEE}

\thanks{This work has been supported, in part, by the European Commission through the H2020 RISE-6G project under grant agreement number 101017011 and the H2020 ARIADNE project under grant agreement number 871464. D. Dardari was sponsored in part by the Theory Lab, Central Research Institute, 2012 Labs, Huawei Technologies Co., Ltd.}

\thanks{F. Jiang is with School of Information Technology, Halmstad University, Sweden. Email: fan.jiang@hh.se}

\thanks{A. Abrardo is with Department of Information Engineering and Mathematical Sciences, University of 
Siena, Italy and with CNIT, National Inter-University Consortium for Telecommunications
Italy. Email: abrardo@unisi.it} 

\thanks{ K. Keykhoshravi is with Ericsson AB, 41756, Gothenburg, Sweden. Email: kamran.keykhosravi@ericsson.com}

\thanks{
H. Wymeersch is with Department of Electrical Engineering, Chalmers University of Technology, Sweden. Email: henkw@chalmers.se}

\thanks{D. Dardari is with Department of Electrical, Electronic, and Information Engineering, University of Bologna, Italy and with CNIT, National Inter-University Consortium for Telecommunications
Italy. Email: davide.dardari@unibo.it} 

\thanks{M. Di Renzo is with Université Paris-Saclay, CNRS, CentraleSupélec, Laboratoire des Signaux et Systémes. Email: marco.di-renzo@universite-paris-saclay.fr}
}
\begin{document}
\usetikzlibrary{shapes.multipart,intersections}

\maketitle
\begin{abstract}
    \Acp{RIS} have  tremendous potential to boost communication performance, especially when the \ac{LOS} path between the \ac{UE} and \ac{BS} is blocked. To control the \ac{RIS},  \ac{CSI} is needed, which entails significant pilot overhead. To reduce this overhead and the need for frequent \ac{RIS} reconfiguration, we propose a novel framework for integrated localization and communications, where \ac{RIS} configurations are fixed during location coherence intervals, while \ac{BS} precoders are optimized every channel coherence interval. This framework leverages accurate location information obtained with the aid of several \acp{RIS} as well as novel \ac{RIS} optimization and channel estimation methods. Performance in terms of localization accuracy, channel estimation error, and achievable rate demonstrates the effectiveness of the proposed approach. 
\end{abstract}

%---------------------------------------------
%---------------------------------------------
%---------------------------------------------

\section{Introduction}
%---------------------------------------------
%\subsubsection*{Background}
As the fifth-generation (5G) cellular network is being deployed worldwide, the research community is investigating key technologies towards the sixth-generation (6G), which is expected to be standardized in the late 2020s \cite{jiang2021road, UusRugBol21,YouWanHuaGao21}. Among the key enablers, we count the introduction of reconfigurable intelligent surfaces (\Acp{RIS}), which are large planar arrays of configurable small meta-atoms \cite{LiuLiuMuHou21,basar2019wireless,  RenZapDebAloYueRosTre20}. Such \Acp{RIS} can be placed on regular surfaces and through their configuration enable the modification of the radio propagation channel far beyond what was previously possible. An important canonical use case is to overcome the line-of-sight (\ac{LOS}) blockage between a base station (\ac{BS}) and a user equipment (\ac{UE}), which is especially relevant in millimeter wave (mmWave) and sub-terahertz (THz) frequency bands (30 GHz - 300 GHz)  \cite{RenZapDebAloYueRosTre20, wan2021terahertz,BouAle21}. 

In parallel to this technology-driven development, another key enabler towards 6G is the convergence of communication, localization and sensing, often referred to as \ac{ISAC} \cite{liu2021integrated} or even \ac{ISLAC} \cite{wymeersch2021integration,LimBelBerBou21}. Driven largely by the increased available bandwidth and larger arrays at the transmitter and receiver, radio signals have great potential to enable accurate localization and sensing \cite{BarWymMacBru21}. Moreover, the geometric nature of the propagation channel even mandates that communication, localization, and sensing should be jointly designed, as high-rate directional transmissions can only be provided with prior knowledge of \ac{UE} locations and predictions of blockages in the environment. This will make 6G the first generation where localization is not an add-on feature to communication, but localization is designed from the onset to operate jointly with communication, with strong mutual synergies \cite{wymeersch2021integration, HeJiaKeyKokWymJun21}. \acp{RIS} are expected to play an important role, both to increase or maintain data rates, but also to support accurate user localization and tracking \cite{BjoWymMatPop22,wymeersch2021integration}. 
Research on \acp{RIS} for communication and localization has progressed enormously over the past few years \cite{StrAleSciDi21,Dardari20, BjoWymMatPop22,TanCheCheDai21,LiuWuDiYua22,KeyKesGraWym20,ZhaZhaDiBiaHanSon20}. Nevertheless, important and fundamental challenges remain. Among these challenges, two inter-related problems stand out: (i) \emph{how is it possible to reduce the channel state information (\ac{CSI}) estimation overhead and the RIS configuration rate by jointly exploiting localization and communication?} (ii) \emph{How and when to control the \ac{RIS} meta-atoms to support communication and localization functions?} 
%---------------------------------------------
\subsubsection*{RIS Channel Estimation}
Conventionally, the RIS configuration requires the knowledge of \ac{CSI} over the BS-RIS and RIS-UE links. The channel estimation problem is challenging, as a nearly-passive \ac{RIS} does not have either the possibility to locally estimate the channel or the possibility of sending pilot signals \cite{ZheYouMeiZha22, LiaCheLonHeLinHuaLiuSheRen21}. This leads to a cascaded channel estimation problem at either the \ac{BS} or \ac{UE} \cite{LiaCheLonHeLinHuaLiuSheRen21}, which requires a sequence of different \ac{RIS} configurations during the estimation process. The number of such configurations depends on the underlying structure of the channel. When the channel is sparse and thus can be described by few parameters, efficient channel estimation is possible. In unstructured channels, on the other hand, the overhead is proportional to the number of \ac{RIS} elements if the RIS is optimized based on instantaneous \ac{CSI}.
In \cite{YouZheZha:20,JenCar:20}, for example, random phase and structured configurations of \ac{RIS} elements are investigated by showing how \ac{DFT}-based sequences achieve the minimum variance estimation provided that the number of pilot symbols is larger or equal to the number of RIS elements. Inspired by the sparsity of the mmWave/THz propagation channel, a class of compressed sensing (CS) based channel estimation schemes have been developed \cite{wan2021terahertz, WanFanDuaLi20, HeWymJun21, MirAli21}. Generally, the performance of CS-based methods relies on the proper sparse representation of the cascaded channel, and better estimation performance usually requires high computational complexity. Another strategy is to factorize the high-dimensional cascaded channel into a set of low-dimensional sub-channels \cite{WeiHuaAleYue21,LiuYuaZha20,LinJinMatYou21}. By recovering the low-dimensional subchannels through subspace-based algorithms (such as singular value decomposition (SVD) or high-order SVD), the overall cascaded channel is obtained. However, the training overhead and complexity increase with the size of all dimensions, and the channel estimation performance also depends on the accuracy of the decomposition models. The application of machine learning methods for RIS channel estimation have been considered \cite{KunMck21, LiuNgYua22,LiuLeiZha20}. Since analytical closed-form expressions of the channel estimate and channel estimation error are hard to obtain with machine learning, this approach is usually not suitable to theoretically quantify the RIS performance.

%---------------------------------------------
\subsubsection*{RIS Optimization in Communications}
In general, channel estimation for \ac{RIS} configuration requires significant amount of training overhead \cite{ZheYouMeiZha22,ZapRenShaQiaDeb21}. Instead of relying on instantaneous \ac{CSI} for RIS optimization, the use of statistical CSI for RIS configuration has been considered recently \cite{LuoLiJinChe21, GanZhoHuaZha21,JiaYeCui20,DaiZhuPanRenWan22}. For example, upper bounds for the uplink and downlink ergodic rate are derived in \cite{GanZhoHuaZha21,LuoLiJinChe21}. Based on the obtained analytical formulas, the RIS phase configuration is optimized using the alternating direction method of multipliers (ADMM) and alternating optimization (AO) based algorithms. The authors of \cite{JiaYeCui20} study a multiple BS interference channel, and derive an upper bound for the ergodic rate by assuming the maximal ratio transmission (MRT) precoding scheme. The RIS optimization is performed by using an iterative parallel coordinate descent (PCD) method. The research works in \cite{LuoLiJinChe21, GanZhoHuaZha21,JiaYeCui20} consider either single-user or signal-antenna user setups. In addition, the ergodic rate is obtained by assuming a specific transmission scheme (such as the MRT in \cite{GanZhoHuaZha21,DaiZhuPanRenWan22}). In order to improve the achievable rate, the two-timescale \ac{CSI} scheme \cite{HanTanJinWenMa19, HuDaiHanWan21,  ZhiPanRenWan21,Pan21} has been proposed. The main idea of the two-timescale scheme is to optimize the RIS configuration based on long-term \ac{CSI}, and the precoding at the BS based on instantaneous \ac{CSI} \cite{HanTanJinWenMa19,Pan21}. Since the long-term \ac{CSI} changes slowly, the RIS configuration does not need to be updated frequently. In addition, the acquisition of the instantaneous \ac{CSI} for optimizing the BS precoding requires a number of pilot symbols that depends on the number of users and antenna elements at that UEs, but that is independent of the number of RIS elements. In \cite{ZhiPanRenWan21}, the authors first obtain the aggregated channel from the users to the BS in each channel coherence interval, by utilizing a linear minimum mean square error estimator (LMMSE), and they then derive closed-form expressions of the ergodic achievable rate under the assumption of maximal ratio combining (MRC) at the BS. Due to these assumptions (LMMSE and MRC), the resulting expression of the uplink ergodic rate is not optimal \cite{ZhiPanRenWan21,Pan21}. The analysis in \cite{ZhiPanRenWan21} shows, however, that the two-timescale scheme requires knowledge of the locations and the angles of the users with respect to the BS and the RIS, which vary much slower than the instantaneous CSI. Specifically, the rate depends on the \emph{location}-dependent angle-of-arrival (AOA) and angle-of-departure (AOD), which indicates that accurate estimates of the locations of the users are needed for system optimization. By contrast, the location uncertainty of the UEs has been taken into account in \cite{abrardo2020intelligent} when optimizing the \ac{RIS} configuration. However, the authors of \cite{abrardo2020intelligent} did not consider the inherent ability of \acp{RIS} to help estimating the locations of the UEs. Also, the precoding design does not account for channel estimation errors.

%---------------------------------------------
\subsubsection*{RIS for Localization}
Based on these research works, it is apparent that the optimization of RIS-aided channels based on statistical CSI requires the estimation of the location of the UEs. The presence of RISs can help this challenging task. Specifically, localization theoretical bounds have been investigated in \cite{WymDen20,HeWymKonSilJun20, ElzGueGuiAlo20}  with the purpose of understanding the potential advantages of using RISs compared to schemes based only on the natural scattering of the environment. Practical RIS-aided localization algorithms can be found in \cite{ZhaZhaDiBiaHanSon20,ZhaZhaDiBiaHanSon21,NguGeoGra20,KeyKesGraWym20,BjoWymMatPop22,DarDecGueGui:J21}.
In \cite{ZhaZhaDiBiaHanSon20,ZhaZhaDiBiaHanSon21}, a receive signal strength (RSS) based multi-user positioning scheme is proposed in which the phase profile of the RIS is optimized to obtain a favorable RSS distribution in space, and thus a better discrimination of the RSS signature of neighboring locations.  
In \cite{NguGeoGra20}, a machine learning method for RSS-based  fingerprint localization is investigated. The authors demonstrate that the diversity offered by a RIS can be successfully used to generate reliable radio maps. 
Better performance can be obtained by exploiting phase/time-of-arrival (TOA) of signals. In this direction, the authors of  \cite{KeyKesGraWym20,BjoWymMatPop22} propose a low-complexity localization algorithm that estimates the TOA of the direct path and the path reflected by the RIS, as well as the AOD from the RIS to infer, in the far-field region, the position of the UE in the presence of synchronization errors. In \cite{DarDecGueGui:J21}, a narrowband and a two-step wideband positioning algorithms exploiting  near-far propagation conditions are proposed. These solutions operate when the BS is blocked and positioning relies only on the signal reflected by the RIS. To our best knowledge, there exist no research works that combine localization and communication with the purpose of reducing the overall signaling and estimation overhead in \ac{RIS}-aided networks.

In this paper, we investigate how communication and localization can be jointly exploited in order to significantly reduce the \ac{CSI} estimation overhead by optimizing the configuration of the \ac{RIS} for several channel coherence intervals. This leads to a novel frame structure, comprising infrequent localization and \ac{RIS} control tasks, combined with a more frequent optimization of the \ac{BS} precoders to maximize the transmission rate. Our contributions are summarized as follows:
\begin{itemize}
    \item We propose a novel integrated localization and communication framework and protocol for multi-RIS,  multi-user MIMO communications, consisting of three phases: Phase I for localization, Phase II for location-aided channel estimation, and Phase III for data transmission. Instead of relying on external sources for localization, the proposed framework obtains the location information based on the transmission of dedicated pilot signals in Phase I. Also, as the locations of the UEs change slowly compared to the channel variations, we perform the localization task of the UEs based on a longer timescale.
    \item We propose a design approach for the optimal RIS profile that requires the RIS configuration at a low rate and that accounts for the localization performance in Phase I. Resorting to the location-based RIS optimization scheme proposed in \cite{abrardo2020intelligent}, the proposed RIS design strategy works well both in static and dynamic scenarios.
    \item We design a channel estimation scheme in Phase II using location information. We derive a closed-form expression for the channel covariance matrix of the estimation error and show that it is sufficiently accurate with the aid of numerical results. Our analysis reveals that the effective achievable rate is significantly improved by the proposed channel estimation algorithm that relies on prior location information compared to existing approaches.
    \item We propose optimal precoder design schemes that account for the channel estimation error to maximize the {conditional achievable rate. The proposed precoder design achieves near-optimal rate performance with a small number of pilot symbols, thanks to} a proper configuration of the RIS phase profile, an improved channel estimation, and a proper precoder design, all benefiting from the localization in Phase I.
\end{itemize}
The proposed framework has the following distinguishable features: 1) the integration of localization into the communication system design; 2) the optimal RIS configuration to maximize the {conditional} achievable rate based on periodical location estimates; 3) the optimal precoding design to maximize the {conditional} achievable rate based on the estimation of
%\DD{better to emphasize: low-overhead instantaneous CSI ...?} 
instantaneous \ac{CSI}; 4) the reduction of the overhead for instantaneous \ac{CSI} acquisition with the help of localization. These targets are significantly different from existing two-timescale schemes \cite{HanTanJinWenMa19, HuDaiHanWan21,  ZhiPanRenWan21,Pan21} {because the localization of the UEs is integrated with the tasks of RIS optimization and channel estimation;} and because {we rely upon the statistical position-based RIS optimization scheme proposed in \cite{abrardo2020intelligent} by considering only imperfect instantaneous CSI}.
%---------------------------------------------

This paper is structured as follows. In Section \ref{sec_model}, the system models including the RIS reflection and channel models are introduced. The considered time scales (fast and slow time scales) and the proposed framework are presented in Section \ref{sec_Framework}. The location-coherent optimization method is presented in Section \ref{sec_LoccoherentOpt} and the channel-coherent optimization algorithm is described in Section \ref{sec_ChcoherentOpt}. The numerical results are presented in Section \ref{sec_results}, and conclusions are drawn in Section \ref{sec_conclusion}.
%---------------------------------------------
\subsubsection*{Notations}
Vectors and matrices are denoted by bold lowercase and uppercase letters,
respectively. The notations $(\cdot)^*$, $(\cdot)^{\mathrm{T}}$, $(\cdot)^{\mathrm{H}}$, $(\cdot)^{-1}$, and $(\cdot)^{\dagger}$, are reserved for the conjugate, transpose, conjugate transpose, inverse, and Moore-Penrose pseudoinverse operations. The expectation is denoted by $\mathrm{E}\{ \cdot \}$. The notation $\mathrm{Diag}(\boldsymbol{a})$ is to form a diagonal matrix with $\boldsymbol{a}$ being the diagonal elements. The operation $\mathrm{vec}(\boldsymbol{A})$ is to transform the matrix $\boldsymbol{A}$ into a column vector by stacking the columns on top of one another. The symbol $\otimes$ denotes the Kronecker product. $\delta\left(\cdot \right)$ is the Dirac delta function. $\Re \{ \cdot \}$ and $\Im \{ \cdot \}$ denote the real and imaginary part, respectively. The 2-argument arctangent function $\arctan2(x, y)$ returns a single value $\theta$ such that $-\pi <\theta \leq \pi$ and, for $r=\sqrt{x^2+y^2}$, $x=r\cos(\theta)$ and $y=r\sin(\theta)$. The inverse of the cosine function is denoted by $\arccos(\cdot)$. The Frobenius norm is denoted by $\Vert\cdot\Vert_{\mathrm{F}}$. The three-dimension (3D) rotation group of special orthogonal matrices is denoted by $\text{SO}(3)$. Complex Gaussian random vectors are denoted by $\boldsymbol{a} \sim \mathcal{CN}(\bar{\boldsymbol{a}}, \boldsymbol{R}_{\mathrm{a}})$ with $\mathrm{E}\{\boldsymbol{a} \} = \bar{\boldsymbol{a}}$, and $\mathrm{E}\{(\boldsymbol{a} - \bar{\boldsymbol{a}})(\boldsymbol{a} - \bar{\boldsymbol{a}})^{\mathrm{H}} \} )= \boldsymbol{R}_{\mathrm{a}}$; if all entries of $\boldsymbol{a}$ are real numbers, we use $\boldsymbol{a} \sim \mathcal{N}(\bar{\boldsymbol{a}}, \boldsymbol{R}_{\mathrm{a}})$.

\section{Communication System and Channel Model}\label{sec_model}
We consider a narrowband system with multiple RISs and multiple users as shown in Fig. \ref{fig:MultiRIS_MultiUser}. We assume that the \ac{LOS} paths from the \ac{BS} to the \acp{UE} are blocked, so that only the NLOS paths are present in the BS-UE direct links.\footnote{This is the deployment scenario where an RIS can be more suitable for improving the communication performance.} The carrier frequency is $f_0$, and the corresponding wavelength is $\lambda = c/f_0$ where $c$ is the speed of light. A uniform rectangular array (URA) is deployed at the BS whose center-location is $\boldsymbol{p}_{\mathrm{B}}\in \mathbb{R}^3$, with $N_\mathrm{B} = N_\mathrm{B,x} \times N_\mathrm{B, y}$ antenna array elements and the orientation matrix $\boldsymbol{O}_{\mathrm{B}}\in \text{SO}(3)$. In the system, there are $K$ RISs whose center-locations and orientations are denoted by $\boldsymbol{p}_{\mathrm{R}_k}\in \mathbb{R}
^3$ and $\boldsymbol{O}_{\mathrm{R}_k}\in \text{SO}(3)$, respectively. Each RIS comprises $P = N_{\mathrm{R}_k, \mathrm{x}} \times N_{\mathrm{R}_k, \mathrm{y}}$ unit cells (meta-atoms), %\DD{Note that in the introduction we called them meta-atoms} 
forming a URA. {Based on the knowledge of the BS location, we can either optimize the locations of the RISs or we can select the RISs in the network such that the BS-RIS links are in LOS.} The number of \Acp{UE} under simultaneous service is $I\ge 1$. The $i$-th UE is equipped with a URA of $N_{\mathrm{U}_i} = N_{\mathrm{U}_i, \mathrm{x}} \times N_{\mathrm{U}_i, \mathrm{y}}$ antenna elements, whose center-location is $\boldsymbol{p}_{\mathrm{U}_i}\in \mathbb{R}
^3$ and whose orientation matrix is $\boldsymbol{O}_{\mathrm{U}_i}\in \text{SO}(3)$.
All arrays have {cell/element} spacing equal to $\lambda/2$.\footnote{In RISs whose elements have inter-distances smaller than  half-wavelength, the mutual coupling needs to be taken into account \cite{RenZapDebAloYueRosTre20}. This generalization is postponed to a future research work.}
\begin{figure*}
    \centering
    \includegraphics[width=0.7\linewidth]{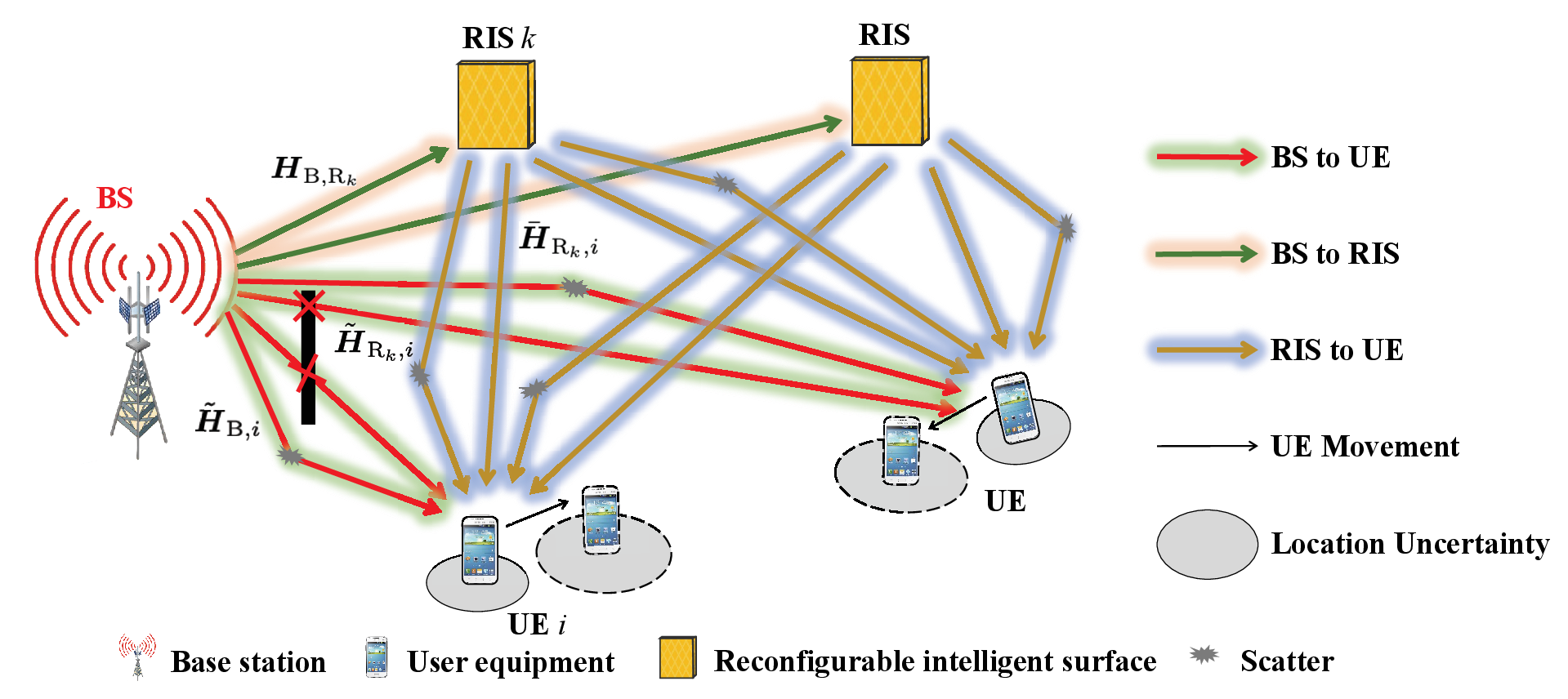}
    \caption{Application scenario: integrated localization and communication with multiple RISs and UEs. We assume that the LOS path from the BS to the UE is blocked while the RIS-BS links are in LOS. The locations of the UEs are not perfectly known and the UEs can randomly move throughout the network.}\label{fig:MultiRIS_MultiUser} 
\end{figure*}

%\vspace{-8mm}
\subsection{Reflection Model for the RIS}
Assuming each unit cell is sufficiently small to be considered in the far-field region of the \ac{BS} and {UE}, the local reflection coefficient of the $p$-th unit cell towards the general direction of scattering $\boldsymbol{\Theta} = (\theta_{\mathrm{az}}, \theta_{\mathrm{el}})$, {with $\theta_{\mathrm{az}}$ and $\phi_{\mathrm{el}}$} denoting the azimuth and elevation angles in the local coordinate system of the RIS, can be modeled as $ r_p \left( \boldsymbol{\Theta}_{\mathrm{inc}}, \boldsymbol{\Theta} \right) = \sqrt{F\left( \boldsymbol{\Theta}_{\mathrm{inc}} \right) F\left( \boldsymbol{\Theta} \right)} G_\mathrm{c} b_p$ \cite{abrardo2020intelligent}, where $F ( \boldsymbol{\Theta} )$ is the normalized power radiation pattern of each unit cell, which is assumed to be frequency-independent within the bandwidth of interest, and is defined as
\begin{align}
F\left( \boldsymbol{\Theta} \right) = \left\{ 
\begin{array}{ll}
{\cos^q \left( \theta_{\mathrm{el}} \right),} & {\theta_{\mathrm{el}} \in \left[0, \pi/2 \right], \theta_{\mathrm{az}} \in \left[0, 2\pi \right]}; \\
0, & \text{otherwise}
\end{array}
\right. \label{eqnprp}
\end{align}
where $q$ is a tunable parameter, $\boldsymbol{\Theta}_{\mathrm{inc}} = (\theta_{\mathrm{az}, \mathrm{inc}}, \theta_{\mathrm{el}, \mathrm{inc}})$ is the angle of incidence with respect to the RIS, and $G_\mathrm{c}$ is the boresight gain of the unit cell \cite{Ell21,abrardo2020intelligent, DegVitDiTre22}. The term $b_p$, which is often referred to as the load reflection coefficient, is defined as $b_p = \rho_p \exp (\jmath \phi_p)$, where $\rho_p$ and $\phi_p$ are the amplitude and phase of the $p$-th RIS element, respectively. The load reflection coefficient of each unit cell is the parameter of the RIS that can be optimized for performance improvement. In this paper, for simplicity, we assume $\rho_p=1$. Therefore, the optimization of $\boldsymbol{b} = [b_1, b_2, \cdots, b_P]^{\mathrm{T}}$ is equivalent to optimize the RIS phase profile. 

\subsection{Indirect Link Models (BS-RIS-UE Channel)}
As mentioned, the BS, RISs, and UEs are equipped with URAs with $N_{\mathrm{x}} \times N_{\mathrm{y}}$ elements. For a given direction $\boldsymbol{u} = [\sin (\theta_{\mathrm{el}}) \cos (\theta_{\mathrm{az}}), \sin (\theta_{\mathrm{el}}) \sin (\theta_{\mathrm{az}}), \cos (\theta_{\mathrm{el}})]^{\mathrm{T}}$, where $\theta_{\mathrm{el}}$ and $\theta_{\mathrm{az}}$ are the elevation and azimuth angles, respectively, the steering vector is defined as $\boldsymbol{a}( \boldsymbol{Q}, \boldsymbol{\theta} ) = [a_{1}, a_{2}, \cdots, a_{N_{\mathrm{x}} N_{\mathrm{y}}}]^{\mathrm{T}}$, with the $((n_{\mathrm{x}} - 1)\times N_{\mathrm{y}} + n_{\mathrm{y}})$-th element given by $ a_{(n_{\mathrm{x}} - 1)\times N_{\mathrm{y}} + n_{\mathrm{y}}} = \exp ( {\jmath 2\pi   }/{\lambda}\boldsymbol{q}_{n_{\mathrm{x}}, n_{\mathrm{y}}}^{\mathrm{T}} \boldsymbol{u} )$ and $\mathbf{q}_{n_{\mathrm{x}}, n_{\mathrm{y}}} = [ ( n_{\mathrm{x}} - 1 ) \lambda/2 - ( N_{\mathrm{x}} - 1 ) \lambda/4, ( n_{\mathrm{y}} - 1 ) \lambda/2 - ( N_{\mathrm{y}} - 1 ) \lambda/4, 0 ]^{\mathrm{T}}$ is the position of the $(n_{\mathrm{x}}, n_{\mathrm{y}})$-th antenna element for $n_{\mathrm{x}}=1, 2, \cdots, N_{\mathrm{x}}$ and $n_{\mathrm{y}}=1, 2, \cdots, N_{\mathrm{y}}$. We use the notation $\boldsymbol{\theta} = [\theta_{\mathrm{el}}, \theta_{\mathrm{az}}]^{\mathrm{T}}$ for the angles and $\boldsymbol{Q} \in \mathbb{R}^{N_{\mathrm{x}} N_{\mathrm{y}} \times 3 }$ with $\boldsymbol{Q} = [\boldsymbol{q}_{1, 1}, \boldsymbol{q}_{1, 2}, \cdots, \boldsymbol{q}_{N_{\mathrm{x}}, N_{\mathrm{y}}} ]^{\mathrm{T}}$ for the positions of the antenna elements in the array's local coordinate system. 

\subsubsection{BS-RIS channel} 
The channel matrix from the BS to the $k$-th RIS, $\boldsymbol{H}_{\mathrm{B}, \mathrm{R}_k} \in \mathbb{C}^{P\times N_{\mathrm{B}}}$, is
\begin{align}
\boldsymbol{H}_{\mathrm{B}, \mathrm{R}_k} = \alpha_{\mathrm{B}, k} \boldsymbol{a} \left(\boldsymbol{Q}_{\mathrm{R}_k}, \boldsymbol{\phi}_{\mathrm{R}_k} \right) \boldsymbol{a}^{\mathrm{T}} \left( \boldsymbol{Q}_{\mathrm{B}}, \boldsymbol{\theta}_{\mathrm{B}, \mathrm{R}_k} \right),
\end{align}
and $\alpha_{\mathrm{B}, k}={\sqrt{F(\boldsymbol{\phi}_{\mathrm{R}_k})G_{\mathrm{T}} G_{\mathrm{c}}} \lambda }/({4\pi | \boldsymbol{p}_{k, \mathrm{B}} | }) \exp (- \jmath {2\pi | \boldsymbol{p}_{k, \mathrm{B}} |}/{\lambda} )$ is the path gain of the LOS from the BS to the $k$-th RIS. The matrices
$\boldsymbol{Q}_{\mathrm{B}} \in \mathrm{R}^{N_{\mathrm{B}, \mathrm{x}} N_{\mathrm{B}, \mathrm{y}} \times 3}$ and $\boldsymbol{Q}_{\mathrm{R}_k} \in \mathrm{R}^{P \times 3}$ contain the positions of the antenna elements of the BS and the $k$-th RIS, respectively, and $\boldsymbol{\theta}_{\mathrm{B}, \mathrm{R}_k} = [\theta_{\mathrm{B}, k}^{\mathrm{el}}, \theta_{\mathrm{B}, k}^{\mathrm{az}} ]^{\mathrm{T}}$
is the AOD from the BS to the $k$-th RIS, which corresponds to the direction of the vector $\boldsymbol{p}_{k, \mathrm{B}} = \boldsymbol{O}_{\mathrm{B}} (\boldsymbol{p}_{\mathrm{R}_k} - \boldsymbol{p}_{\mathrm{B}})$ in the local coordinate system of the BS.
The elevation and azimuth angles are given by $\theta_{\mathrm{B}, k}^{\mathrm{az}} =\arctan2([\boldsymbol{p}_{k, \mathrm{B}}]_{2},[\boldsymbol{p}_{k, \mathrm{B}}]_{1})$ and $\theta_{\mathrm{B}, k}^{\mathrm{el}} =\arccos\left([\boldsymbol{p}_{k, \mathrm{B}}]_{3}/\Vert\boldsymbol{p}_{k, \mathrm{B}}\Vert\right)$,
respectively. Similarly, $\boldsymbol{\phi}_{\mathrm{R}_k} = [\phi_{\mathrm{R}_k}^{\mathrm{el}}, \phi_{\mathrm{R}_k}^{\mathrm{az}}]$ is the  AOA at the $k$-th RIS from the BS.

\subsubsection{RIS-UE channel} The channel from the $k$-th RIS to the $i$-th UE is denoted by $\boldsymbol{H}_{\mathrm{R}_k, i} = \bar{\boldsymbol{H}}_{\mathrm{R}_k, i} (\boldsymbol{p}_{\mathrm{U}_i}) + \tilde{\boldsymbol{H}}_{\mathrm{R}_k, i}$, and it is composed of a LOS component (denoted by $\bar{\boldsymbol{H}}_{\mathrm{R}_k, i} (\boldsymbol{p}_{\mathrm{U}_i}) \in \mathbb{C}^{N_{\mathrm{U}_i} \times P}$) and an NLOS component (denoted by $\tilde{\boldsymbol{H}}_{\mathrm{R}_k, i} \in \mathbb{C}^{N_{\mathrm{U}_i} \times P}$). Based on a Rice channel model, and defining the Rician factor $\kappa_{k, i}\ge 0$, the LOS channel component is given by \cite{Pan21}
\begin{align}
\bar{\boldsymbol{H}}_{\mathrm{R}_k, i} (\boldsymbol{p}_{\mathrm{U}_i}) = \bar{\alpha}_{k, i} \boldsymbol{a} \left(\boldsymbol{Q}_{\mathrm{U}_i}, \boldsymbol{\phi}_{\mathrm{R}_k, \mathrm{U}_i} \right) \boldsymbol{a}^{\mathrm{T}} \left( \boldsymbol{Q}_{\mathrm{R}_k}, \boldsymbol{\theta}_{\mathrm{R}_k, \mathrm{U}_i} \right),
\end{align}
where $\bar{\alpha}_{k, i} = \sqrt{{\kappa_{k, i} \rho_{k, i}}/{(\kappa_{k, i} + 1)}} \exp (- \jmath {2\pi | \boldsymbol{p}_{{\mathrm{R}_k}, {\mathrm{U}_i}} |}/{\lambda} )$ is the path gain of the LOS from the $k$-th RIS to the $i$-th UE
%, given by
with $\rho_{k, i} = {F(\boldsymbol{\theta}_{\mathrm{R}_k, \mathrm{U}_i})G_{\mathrm{R}} G_{\mathrm{c}} \lambda^2}/{(16\pi^2 \| \boldsymbol{p}_{{\mathrm{R}_k}, {\mathrm{U}_i}} \|^{\alpha})}$,
where $\boldsymbol{p}_{{\mathrm{R}_k}, {\mathrm{U}_i}} = \boldsymbol{O}_{\mathrm{U}_i} (\boldsymbol{p}_{\mathrm{U}_i} - \boldsymbol{p}_{\mathrm{R}_k})$, and $\alpha$ is the path-loss exponent. The matrix $\boldsymbol{Q}_{\mathrm{U}_i} \in \mathbb{C}^{N_{\mathrm{U}_i} \times 3}$ contains the positions of the antenna elements %in the local coordinate system 
of the $i$-th UE. $\boldsymbol{\theta}_{\mathrm{R}_k, \mathrm{U}_i} = [{\theta}_{\mathrm{R}_k, \mathrm{U}_i}^{\mathrm{el}}, {\theta}_{\mathrm{R}_k, \mathrm{U}_i}^{\mathrm{az}}]^{\mathrm{T}}$ is the AOD from the $k$-th RIS to the $i$-th UE,
and $\boldsymbol{\phi}_{\mathrm{R}_k, \mathrm{U}_i} = [{\phi}_{\mathrm{R}_k, \mathrm{U}_i}^{\mathrm{el}}, {\phi}_{\mathrm{R}_k, \mathrm{U}_i}^{\mathrm{az}} ]^{\mathrm{T}}$ is the AOA from the $k$-th RIS to the $i$-th UE.

The NLOS component $\tilde{\boldsymbol{H}}_{\mathrm{R}_k, i}$ is modeled as a random matrix, and $\mathrm{vec}(\tilde{\boldsymbol{H}}_{\mathrm{R}_k, i}) \sim \mathcal{CN}(0, \tilde{\boldsymbol{R}}_{k,i})$.
When $\tilde{\boldsymbol{R}}_{k,i}$ is the identity matrix, the RIS-UE channel model reduces to the independent and identically distributed Rician fading model \cite{ZhiPanRenWan21,Pan21,DaiZhuPanRenWan22}.

\subsubsection{BS-RIS-UE channel}
Given the RIS profile $\boldsymbol{b}_k$, the overall channel matrix from the BS to the $i$-th UE via the $k$-th RIS, $\boldsymbol{H}_{k, i} (\boldsymbol{b}_k, \boldsymbol{p}_{\mathrm{U}_i}) \in \mathbb{C}^{N_{\mathrm{U_i}} \times N_{\mathrm{B}}}$, is given by
\begin{align}
\boldsymbol{H}_{k, i}(\boldsymbol{b}_k, \boldsymbol{p}_{\mathrm{U}_i}) = &(\bar{\boldsymbol{H}}_{\mathrm{R}_k, i}(\boldsymbol{p}_{\mathrm{U}_i}) + \tilde{\boldsymbol{H}}_{\mathrm{R}_k, i}) \mathrm{Diag} \left( \boldsymbol{b}_k \right) \boldsymbol{H}_{\mathrm{B}, \mathrm{R}_k}. \label{eq:CHBRU}
\end{align}
It can be shown that $\boldsymbol{h}_{k, i} (\boldsymbol{b}_k, \boldsymbol{p}_{\mathrm{U}_i}) = \mathrm{vec}(\boldsymbol{H}_{k, i} (\boldsymbol{b}_k, \boldsymbol{p}_{\mathrm{U}_i})) \sim \mathcal{CN}(\bar{\boldsymbol{h}}_{k, i} (\boldsymbol{b}_k, \boldsymbol{p}_{\mathrm{U}_i}), \boldsymbol{R}_{k, i}(\boldsymbol{b}_k) )$, with 
\begin{align}
\bar{\boldsymbol{h}}_{k, i} \left(\boldsymbol{b}_k, \boldsymbol{p}_{\mathrm{U}_i}\right) = &\big( \left( \mathrm{Diag} \left( \boldsymbol{b}_k \right) \boldsymbol{H}_{\mathrm{B}, \mathrm{R}_k} \right)^{\mathrm{T}} \otimes \boldsymbol{I}_{N_{\mathrm{U}_i}} \big) \notag \\
& \cdot \mathrm{vec} \left( \bar{\boldsymbol{H}}_{\mathrm{R}_k, i} (\boldsymbol{p}_{\mathrm{U}_i}) \right), \label{eq:meanCHBRU} \\
\boldsymbol{R}_{k, i} \left(\boldsymbol{b}_k \right) = & \left( \left( \mathrm{Diag} ( \boldsymbol{b}_k ) \boldsymbol{H}_{\mathrm{B}, k} \right)^{\mathrm{T}} \otimes \boldsymbol{I}_{N_{\mathrm{U}_i}} \right) \notag \\
& \cdot \tilde{\boldsymbol{R}}_{k,i} 
 \left( \left( \mathrm{Diag} ( \boldsymbol{b}_k ) \boldsymbol{H}_{\mathrm{B}, k} \right)^{\mathrm{T}} \otimes \boldsymbol{I}_{N_{\mathrm{U}_i}} \right)^{\mathrm{H}}.\label{eq:varCHBRU}
\end{align}

\begin{figure*}
    \centering
    \includegraphics[width=0.7\linewidth]{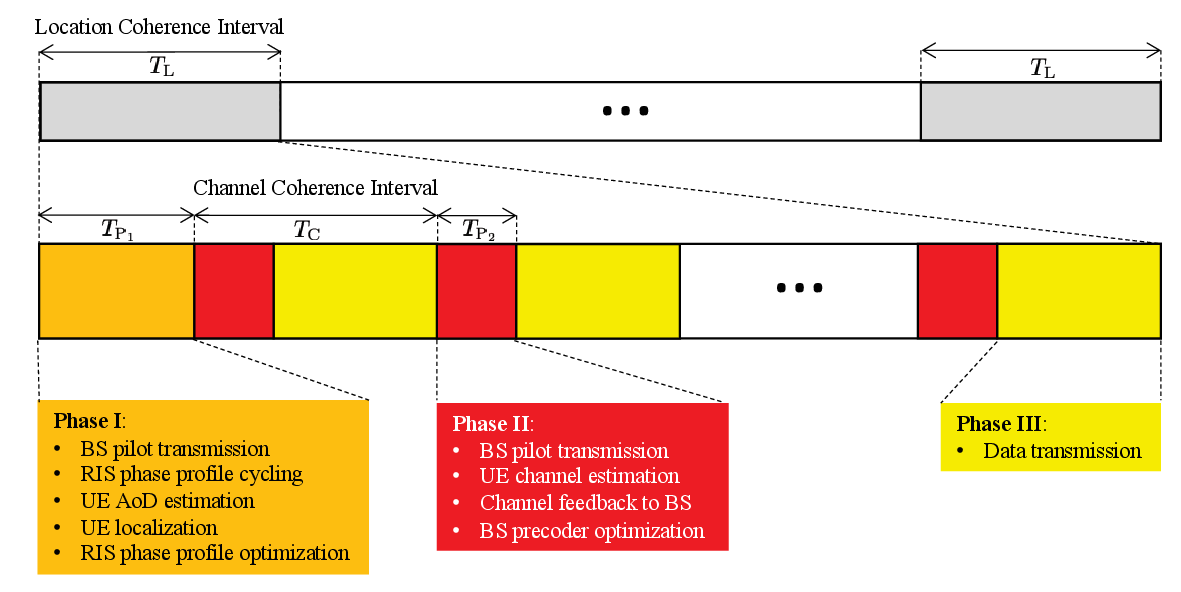}
    \caption{The proposed frame structure: we assume that the UE locations change slowly during the location coherence interval $T_{\mathrm{L}}$, and that the channel remains unchanged during the channel coherence interval $T_{\mathrm{C}}$. Each location coherence interval includes $N_\mathrm{C}$ channel coherence intervals, and it is split into three phases: 1) Phase I with duration $T_{\mathrm{P}_1}$, 2) Phase II with total duration $N_\mathrm{C}T_{\mathrm{P}_2}$, and 3) Phase III with total duration $N_\mathrm{C}(T_{\mathrm{C}} - T_{\mathrm{P}_2})$.} \label{fig:framestructure}
\end{figure*}
\subsection{Direct Link Model (BS-UE Channel)}
The NLOS component of the direct link consists of a number of clustered paths, each corresponding to a micro-level scattering path. As a result, the channel matrix of the direct link from the BS to $i$-th UE, which is denoted by $\tilde{\boldsymbol{H}}_{\mathrm{B}, i} \in \mathbb{C}^{N_{\mathrm{U}_i} \times N_\mathrm{B}}$, can be modeled as a random matrix, where its elements are i.i.d. $\mathcal{CN}(0, \sigma_{\mathrm{B}, i}^2)$ random variables.

\section{Separation of Time Scales and Proposed Framework}\label{sec_Framework}
In the considered integrated localization and communication framework, we consider several location coherence intervals $T_{\mathrm{L}}$ and divide such a period into three phases, as shown in Fig. \ref{fig:framestructure}. Note that the channel changes faster than the locations. Therefore, it is reasonable to assume that the location coherence interval includes multiple channel coherence intervals.

\subsection{Time Scales}
To account for user mobility and channel variations, we consider two time indices, $t$ and $\tau$. 

\subsubsection{Fast time scale}
The fast time index $t$ corresponds to the individual narrowband signals transmitted within one channel coherence interval, whose duration is $T_{\text{C}}$ and is proportional to $\lambda/v$, where $v$ is the maximum \ac{UE} speed.  Hence, the number of transmissions within one channel coherence interval is $T_{\text{C}}B$, where $B$ is the signal bandwidth. 
We denote by $\boldsymbol{b}_{k, t} = [b_1^{(k, t)}, b_2^{(k, t)}, \cdots, b_P^{(k, t)}]^{\mathrm{T}}$ the load reflection coefficients of the $k$-th RIS at time instant $t$. The transmit signal (precoded data or pilot) is denoted by  $\boldsymbol{x}_t = [x_{t, 1}, x_{t, 2}, \cdots, x_{t, {N_\mathrm{B}}}]^{\mathrm{T}}$. The received signal at the $i$-th UE, i.e., $\boldsymbol{y}_{i, t} = [y_1^{(i, t)}, y_2^{(i, t)}, \cdots, y_{N_{\mathrm{U}}}^{(i, t)}]^{\mathrm{T}}$, can be written as
\begin{align}
\boldsymbol{y}_{i, t} = & \boldsymbol{H}_{\mathrm{B}, i} (\mathcal{B}_t, \boldsymbol{p}_{\mathrm{U}_i}) \boldsymbol{x}_t + \boldsymbol{n}_{t}, \label{eqRec}
\end{align}
where $\boldsymbol{H}_{\mathrm{B}, i} (\mathcal{B}_t, \boldsymbol{p}_{\mathrm{U}_i}) = \tilde{\boldsymbol{H}}_{\mathrm{B}, i} + \sum_{k=1}^K \boldsymbol{H}_{k, i} (\boldsymbol{b}_{k, t}, \boldsymbol{p}_{\mathrm{U}_i})$ is the cascaded channel from the BS to UE $i$, with $\mathcal{B}_t = \{\boldsymbol{b}_{1, t}, \boldsymbol{b}_{2, t}, \cdots, \boldsymbol{b}_{K, t} \}$. The white Gaussian noise (AWGN) is denoted by $\boldsymbol{n}_{t}\sim \mathcal{CN} (0, \sigma_n^2 \boldsymbol{I}_{N_\mathrm{U}})$ with $\sigma_n^2 = n_{\mathrm{f}} N_0 B$, where $n_{\mathrm{f}}$ is the noise factor, and $N_0$ is the noise power density.

\subsubsection{Slow time scale}\label{sec_UE_movement}
In contrast, $\tau \in \{0, 1, \cdots, N_{\mathrm{C} }- 1 \}$ is the time index related to the time scale at which the user moves, with sampling time $T_{\text{C}}$. $N_{\mathrm{C}}$ is the number of channel coherence intervals in one location coherence interval (after $T_{\text{P}_1}$), i.e., $T_{\text{L}} = T_{\text{P}_1} + N_{\mathrm{C}}T_{\text{C}}$.
We model the user movement by a random walk process, where the user position at the beginning of a frame is denoted by $\boldsymbol{p}_{\text{U}_i,0}$, while during the coherence interval $\tau$, it is $\boldsymbol{p}_{\text{U}_i,\tau}=\boldsymbol{p}_{\text{U}_i,\tau-1} + \boldsymbol{v}_{i,\tau}$, where $\boldsymbol{v}_{i,\tau} \sim \mathcal{N}(\boldsymbol{0},\boldsymbol{\Sigma}_{\text{pos}})$ and $\boldsymbol{\Sigma}_{\text{pos}}$ is the covariance of the random walk with sampling period $T_{\text{C}}$. 

\subsection{Proposed Frame Structure}\label{Loc_est_model}
\subsubsection{Phase I}
Phase I is intended to estimate the UE locations as well as to optimize the RIS phase profile. To this end, the BS transmits the pilot sequence $\boldsymbol{s}_t^{\mathrm{P}_1}$, $t=1, 2, \cdots, T_{\mathrm{P}_1}B$, and the RISs configures the phase profile $\boldsymbol{b}_{k, t}^{\mathrm{P}_1}$. With the received sequences, we estimate the UE locations based on AOD estimations (see Section \ref{sec_UELocEst}). In addition, we optimize the phase profile of all RISs based on the estimates of the locations and on the covariance matrix of the location uncertainty (see Section \ref{sec_RISPhase}). The optimized phase profiles remain unchanged during Phases II and III until the UE locations are outdated. Phase I is repeated every location coherence interval in order to appropriately update the RIS phase profiles. 

\subsubsection{Phase II}
We divide the remaining time $T_{\mathrm{L}} - T_{\mathrm{P}_1}$ into several channel coherent blocks and assume that the channel remains unchanged in each coherent block. Phase II in each channel coherent interval $T_{\mathrm{C}}$ is intended to obtain the estimate of the composed channel, based on which the optimal precoders of the BS are designed to maximize the achievable rate. To this end, the BS transmits the pilot sequence $\boldsymbol{s}_t^{\mathrm{P}_2}$, $t=1, 2, \cdots, T_{\mathrm{P}_2}B$, and channel estimation is performed on the received pilot symbols (see Section \ref{sec_CHEst}), without involving any configuration of the RIS.\footnote{When a time-division duplexing system is employed, the downlink CSI can be estimated by taking advantage of channel reciprocity through the transmission of pilots in the uplink. Therefore, the proposed framework can still be utilized, by applying an appropriate processing to the channel estimates and to the covariance matrix of the channel estimation errors. However, in this work, we limit our analysis to the downlink only, where the CSI at the BS is obtained through channel feedback \cite{joung2016channel, ma2021model}.}

\subsubsection{Phase III}
The Phase III is intended to perform data transmission with the  RIS phase profile  optimized in Phase I and with the optimal BS precoder optimized in Phase II.

Thus, the localization accuracy of the UEs impacts the communication performance. The user movement, the NLOS components, and the pilots used in Phase I impact the localization accuracy of the UEs, and the channel estimation accuracy in Phase II impacts the achievable rate. Our aim is to understand how localization helps channel estimation and data communication.

\section{Location-coherent Optimization Phase}\label{sec_LoccoherentOpt}
Here, we describe the operations in Phase I, where the BS transmits $T_{\text{P}_1}B$ pilots to estimate the \ac{UE} positions. From the estimated positions, we describe how the \ac{RIS} phase profile is optimized. For both positioning and phase profile optimization, we use performance bounds, in order to obtain fundamental performance insights and to be agnostic to specific algorithms for localization. 

\subsection{UE Location Estimation Bound}\label{sec_UELocEst}
We use Fisher information analysis \cite{KeyKesGraWym20} to obtain the covariance matrix $\boldsymbol{\Sigma}_{i,0}$ of the position of the $i$-th user given the observations $\boldsymbol{y}_{i,t}$ at the beginning of a frame, corresponding to $\tau=0$. For simplicity, but with no loss of generality, we assume that the same pilot symbols are transmitted at each time instance, i.e., $\boldsymbol{x}_t=\boldsymbol{x}$ for all $t$. As the BS-UE channel does not convey location information, it is treated as an interference term in the received signal. Therefore, we simply remove the received signal from the direct link (if any) by designing the RIS phases such that $\boldsymbol{b}_{k, 2\check{t}-1}^{\mathrm{P}_1}=-\boldsymbol{b}_{k, 2\check{t}}^{\mathrm{P}_1}$, for $\check{t}=1, 2, \cdots, T_{\mathrm{P}_1/2}B$, (we assume that $T_{\mathrm{P}_1}B$ is an even number).\footnote{The considered design of the RIS phase profile does not require either prior knowledge of the UE locations or signaling overhead from the BS. Further details can be found in \cite{keykhosravi2021multi}.} Next, at the receiver, we calculate $\check{\boldsymbol{y}}_{i, t}$ as
\begin{align}
\check{\boldsymbol{y}}_{i, \check{t}} = & \frac{1}{2}(\boldsymbol{y}_{i, 2\check{t}-1}-\boldsymbol{y}_{i, 2\check{t}}) \notag \\
= & \sum_{k=1}^K \bar{\boldsymbol{H}}_{\mathrm{R}_k, i} \mathrm{Diag} ( \check{\boldsymbol{b}}_{k,\check{t}} ) \boldsymbol{H}_{\mathrm{B}, \mathrm{R}_k} \boldsymbol{x} + \check{\boldsymbol{n}}_{\check{t},i}. \label{eqRec2}
\end{align}
Here, \eqref{eqRec2} follows by substituting \eqref{eq:CHBRU} into \eqref{eqRec}. Also $\check{\boldsymbol{b}}_{k,\check{t}} = {\boldsymbol{b}}_{k,2\check{t}}$, and  $\check{\boldsymbol{n}}_{\check{t},i}\sim \mathcal{CN}(\boldsymbol{0},\check{\sigma}^2_n\boldsymbol{I}_{N_{\mathrm{U}_i}})$ models the AWGN and the RIS-UE multipath, where $\check{\sigma}^2_n=\sigma_n^2/2+\sigma_i^2$ and $\sigma_i^2=\sum_{k=1}^K  \Vert \boldsymbol{H}_{\mathrm{B}, \mathrm{R}_k}\boldsymbol{x}\Vert_{\mathrm{F}}^2 \sigma^2_{k,i}$. The NLOS components in \eqref{eqRec2} have opposite sign and are added coherently. Also, the NLOS components for different values of $\check{t}$ can be modeled as independent from one another, since the RIS phase profile is configured randomly and independently for different $\check{t}$. In addition, the employed configuration for the phase profile does not require prior knowledge of the UE locations, and it is computationally efficient.

To compute $\boldsymbol{\Sigma}_{i,0}$ from \eqref{eqRec2}, we use a two-step approach. We  define the channel parameter vector for the $i$-th user $\boldsymbol{\eta}_i\in \mathbb{R}^{6K}$ as  $\boldsymbol{\eta}_i = [ \boldsymbol{\theta}_{i}^{\mathrm{T}},\boldsymbol{\phi}_{i}^{\mathrm{T}}, \boldsymbol{\bar{\alpha}}_{ i}    ]^{\mathrm{T}}$, where $\boldsymbol{\theta}_{i}\in \mathbb{R}^{2K}$ and $\boldsymbol{\phi}_{i}\in \mathbb{R}^{2K}$ are vectors containing the AODs  (at the RISs) and AOAs (at the UEs) from the RISs to the $i$-th UE, respectively, and   $\boldsymbol{\bar{\alpha}}_{ i}  $ contains the channel gains. Specifically, $\boldsymbol{\theta}_{i}=[ \boldsymbol{\theta}_{\mathrm{R}_1, \mathrm{U}_i}^{\mathrm{T}},\dots, \boldsymbol{\theta}_{\mathrm{R}_K, \mathrm{U}_i}^{\mathrm{T}}]^{\mathrm{T}}$, $\boldsymbol{\phi}_{i}=[ \boldsymbol{\phi}_{\mathrm{R}_1, \mathrm{U}_i}^{\mathrm{T}},\dots, \boldsymbol{\phi}_{\mathrm{R}_K, \mathrm{U}_i}^{\mathrm{T}}]^{\mathrm{T}}$, and  $\boldsymbol{\bar{\alpha}}_{ i} = [\Re(\bar{\alpha}_{1, i}),\Im(\bar{\alpha}_{1, i}), \dots,\Re(\bar{\alpha}_{K, i}),\Im(\bar{\alpha}_{K, i})]^{\mathrm{T}}$. Next, we compute the Fisher information matrix (FIM) for the channel parameters, which is denoted by $\boldsymbol{J}(\boldsymbol{\eta}_i)\in \mathrm{R}^{6K\times 6K}$, as $\boldsymbol{J}(\boldsymbol{\eta}_i)=\frac{2}{\check{\sigma}_i^{2}}\sum_{\check{t}=1}^{T_{\mathrm{P}_1}B/2}\Re\big\{ \nabla_{\boldsymbol{\eta}_i}\boldsymbol{\mu}_{i,\check{t}}(\nabla_{\boldsymbol{\eta}_i}\boldsymbol{\mu}_{i,\check{t}})^{\mathrm{H}}\big\}$ \cite{Kay:93}. Here, $ \boldsymbol{\mu}_{i,\check{t}}$ is the noise-free observation of \eqref{eqRec2} given by $\boldsymbol{\mu}_{i,\check{t}}=\sum_{k=1}^K \bar{\boldsymbol{H}}_{\mathrm{R}_k, i} \mathrm{Diag} \left( \check{\boldsymbol{b}}_{k,\check{t}} \right) \boldsymbol{H}_{\mathrm{B}, \mathrm{R}_k} \boldsymbol{x}$, where $\nabla_{\bm{\eta}}\boldsymbol{\mu}_{i,t} \in \mathbb{C}^{6K\times N_{\mathrm{U}_i}}$ is obtained as $[\nabla_{\bm{\eta}}\boldsymbol{\mu}_{i,t}]_{r,s} = {\partial [\boldsymbol{\mu}_{i,t}]_{s} }/{\partial [\boldsymbol{\eta}]_r}$, which can be calculated based on the relations in Section~\ref{sec_model}. Also, we define the vector of position parameters as $\boldsymbol{\zeta}_i\in\mathbb{R}^{(4K+3)}$, where $   \boldsymbol{\zeta}_i  = [ \boldsymbol{p}_{\mathrm{U}_{i,0}}^{\mathrm{T}},\boldsymbol{\phi}_{i}^{\mathrm{T}}, \boldsymbol{\bar{\alpha}}_{ i}    ]^{\mathrm{T}}$.\footnote{We localize the UE using the AODs only.} The corresponding FIM matrix can be calculated as $\boldsymbol{J}(\boldsymbol{\zeta}_i)= \boldsymbol{\varUpsilon}_i^{\mathrm{T}} \boldsymbol{J}(\boldsymbol{\eta}_i) \boldsymbol{\varUpsilon}_i$. Here, $\boldsymbol{\varUpsilon_i} \in \mathbb{R}^{6K\times (4K+3)}$ is the Jacobian matrix defined as $[\boldsymbol{\varUpsilon}_i]_{s,r} = {\partial [\boldsymbol{\eta}_i]_{s} }/{\partial [\boldsymbol{\zeta}_i]_r}$, which can be calculated based on the geometrical relations described in Section\,\ref{sec_model}.  The localization  error matrix ${\boldsymbol{\Sigma}}_{i,0}\in\mathbb{R}^{3\times3}$
is obtained as the first $3\times3$ diagonal block of the inverse
of $\boldsymbol{J}({\boldsymbol{\zeta}}_i)$, i.e., 
\begin{align}
{\boldsymbol{\Sigma}}_{i,0}=\left[\boldsymbol{J}^{-1}(\boldsymbol{\zeta}_i)\right]_{1:3,1:3}. \label{PosPhase1}
\end{align}

For comparison, it is interesting to analyze the case study when Phase I is not introduced in Fig. \ref{fig:framestructure}. In this case, the UE location estimates can be obtained using conventional positioning techniques, e.g., fingerprinting and GNSS-based methods \cite{IndoorLoc16}. For simplicity, we refer to this scenario as the \emph{prior}-based scheme with the covariance matrix of location uncertainty given by
\begin{align}
{\boldsymbol{\Sigma}}_{i,0}={\boldsymbol{\Sigma}}_{i,0}^{\mathrm{pri}}. \label{PosPri}
\end{align}

In each case, we can generate a synthetic estimate of the position as $ \hat{\boldsymbol{p}}_{\mathrm{U}_{i,0}} = \boldsymbol{p}_{\mathrm{U}_{i,0}} + \boldsymbol{w}_i$, where $\boldsymbol{w}_i$ is a realization of a random vector with distribution $\mathcal{N}(\boldsymbol{0},{\boldsymbol{\Sigma}}_{i,0})$.

In  Section\,\ref{sec_results}, we show that a practical estimation algorithm achieves the proposed bounds for sufficiently large values of the SNR. We anticipate that we consider a scenario with two RISs, in which the considered channel estimator operates as follows. First, it separates the received signal from each RIS by using a temporal code of a given length, as described in \cite{keykhosravi2021multi}. Then, for each RIS-aided path, it estimates a one-dimensional AoA (assuming the UE is equipped with a linear-array antenna). Based on the estimated AoA, the signals received at different UE antennas are added constructively. Subsequently, the AODs are estimated by using the algorithm in \cite[Sec.\,IV-C]{KeyKesGraWym20}. Finally, a coarse estimate of the UE position is obtained by finding the intersection\footnote{If the two lines do not cross, a point with minimum aggregated distance from the two lines is selected.} of the two lines corresponding to the two AODs. The obtained estimate is then refined by maximizing the  log-likelihood function by using the coarse estimate as the initial point.

It is worth noting that the localization performance could be improved if LOS paths exist between the BS and UE. This case study is postponed to a future research work, since the performance gains depend on the relative strengths between the RIS-aided and LOS paths.

\subsection{RIS Phase Profile Optimization}\label{sec_RISPhase}
The RIS phase profile is optimized based on the statistics of the terminal locations and the NLOS components of all associated connections.
To this end, we take inspiration from the location-based RIS optimization approach proposed in \cite{abrardo2020intelligent}, which assumes that the actual locations of the UEs are uniformly distributed around an estimated location. How the localization of the UEs is performed is, however, not specified in \cite{abrardo2020intelligent}. Therein, the localization statistics are assumed to be available from external sources, e.g., based on the GNSS. By contrast, we consider the localization scheme in Section \ref{sec_UELocEst}, i.e., the RIS optimizer assumes that the actual position of the UEs is distributed around the estimated position according to a normal distribution with covariance matrix given by \eqref{PosPri}.

To elaborate, let $\mathcal{P}$ be a sample set of random variables with distributions $\mathcal{N}\left(\hat{\boldsymbol{p}}_{\mathrm{U}_{i,0}},\boldsymbol{\Sigma}_{i,0}\right)$, where $\hat{\boldsymbol{p}}_{\mathrm{U}_{i,0}}$ is the estimated UE position. Similarly, let $\mathcal{T}$ be a sample set of realizations of the NLOS matrices $\tilde{\boldsymbol{H}}_{\mathrm{B},i}$ and
$\tilde{\boldsymbol{H}}_{\mathrm{R}_k,i}$ generated according to the corresponding statistics and introduce $\Omega = \left\{\mathcal{P},\mathcal{T}\right\}$. Therefore, we denote by $\mathcal{B}=\left\{\boldsymbol{b}_{1},\boldsymbol{b}_{2},\dots,\boldsymbol{b}_{K}\right\}$ the set of vectors containing the reflection coefficients of the $K$ RISs. Note that for any realization ${\boldsymbol{\omega}} \in \Omega$ and for any $\mathcal{B}$, all channel matrices in \eqref{eqRec} can be determined. 
The system thus turns out to be a classical MIMO communication system whose optimal behavior can be determined according to established methods.
Based on \cite{abrardo2020intelligent}, it is therefore possible to optimize the conditional sum rate $R_{\text{tot}}\left(\text{\boldmath{$\omega$}},\mathcal{B}\right)$ by using the block coordinate descent method to solve a weighted MMSE (WMMSE) problem. The framework proposed in \cite{abrardo2020intelligent} allows us to obtain a local optimum of the problem:
\begin{equation} \label{P:sumRateMax1}
\max \limits_{\mathcal{B}} \int  R_{\text{tot}}\left({\text{\boldmath{$\omega$}}},\mathcal{B}\right) f_{\Omega}\left(\text{\boldmath{$\omega$}}\right) \, \text{d}\text{\boldmath{$\omega$}}.
\end{equation}
In \eqref{P:sumRateMax1}, the optimal RIS phase profile is evaluated by maximizing the conditional sum rate, where the average is evaluated with respect to the distribution $f_{\Omega}\left(\text{\boldmath{$\omega$}}\right) $ of ${\Omega}$. Specifically, the integral in \eqref{P:sumRateMax1} is computed by the Monte Carlo method, which involves generating the user locations and the NLOS channel links according to their statistical distribution. The details of the solution of \eqref{P:sumRateMax1} are omitted for clarity. Interested readers can refer to \cite{abrardo2020intelligent} for the derivation.

The proposed approach  belongs to the schemes commonly referred to as two-time scale approaches, where the optimization of the RIS is performed at time scales much longer than the coherence time of the channel. However, classical two-time scale approaches \cite{HanTanJinWenMa19, HuDaiHanWan21, ZhiPanRenWan21,Pan21} are based on the ideal assumption that the positions of the users are known exactly, i.e. $\hat{\boldsymbol{p}}_{\mathrm{U}_{i,0}} = {\boldsymbol{p}}_{\mathrm{U}_{i,0}}$ and $\boldsymbol{\Sigma}_{i,0} = \boldsymbol{0}$. In this case, the set $\mathcal{P}$ reduces to ${\boldsymbol{p}}_{\mathrm{U}_{i,0}}$, i.e., the rate in \eqref{P:sumRateMax1} is averaged over the NLOS channel realizations only. This approach can be considered as a benchmark that results in upper bound performance. To prove the effectiveness of the proposed RIS optimization approach, it is interesting to analyze the case study where the user position is affected by an estimation error, i.e., $\hat{\boldsymbol{p}}_{\mathrm{U}_{i,0}} \ne {\boldsymbol{p}}_{\mathrm{U}_{i,0}}$ but the RIS is optimized without considering this uncertainty, i.e. assuming $\boldsymbol{\Sigma}_{i,0} = \boldsymbol{0}$. Also in this case, called \emph{punctual optimization}, the set $\mathcal{P}$ reduces to a single point, i.e., the rate in \eqref{P:sumRateMax1} is averaged only over the NLOS channel realizations.

\section{Channel-coherent Optimization Phase}\label{sec_ChcoherentOpt}

In this section, we describe the operations executed in Phase II. In each channel coherence interval (indexed by $\tau$), specifically, we estimate the composed channel from the BS to each UE, and optimize the BS precoders. 
For notation convenience, we drop the index $\tau>0$. 
\subsection{Channel Estimation} \label{sec_CHEst}
Given the optimized RIS phase profile $\mathcal{B} = \{\boldsymbol{b}_1, \boldsymbol{b}_2, \cdots, \boldsymbol{b}_K \}$ obtained as described in Section \ref{sec_RISPhase}, the composed channel between the BS and the $i$-th UE is denoted by $\boldsymbol{H}_{\mathrm{B}, i} (\mathcal{B}, \boldsymbol{p}_{\mathrm{U}_i})$. To perform channel estimation, the BS transmits the pilot matrix $\boldsymbol{X}_{\mathrm{P}_2} = [\boldsymbol{x}_1, \boldsymbol{x}_2, \cdots, \boldsymbol{x}_{T_{\mathrm{P}_2}B}]$, and the corresponding received signal at the $i$-th UE is $\boldsymbol{Y}_{i, \mathrm{P}_2} = \boldsymbol{H}_{\mathrm{B}, i} (\mathcal{B}, \boldsymbol{p}_{\mathrm{U}_i}) \boldsymbol{X}_{\mathrm{P}_2} + \boldsymbol{N}_{i, \mathrm{P}_2}$, where $\boldsymbol{N}_{i, \mathrm{P}_2} = [\boldsymbol{n}_{1, i}^{\mathrm{P}_2}, \boldsymbol{n}_{2, i}^{\mathrm{P}_2}, \cdots, \boldsymbol{n}_{T_{\mathrm{P}_2}B, i}^{\mathrm{P}_2}]$ denotes the noise component, with $\boldsymbol{n}_{i, \mathrm{P}_2} = \mathrm{vec}(\boldsymbol{N}_{i, \mathrm{P}_2})\sim\mathcal{CN}(\boldsymbol{0}, \sigma_n^2 \boldsymbol{I}_{N_{\mathrm{U}_i}T_{\mathrm{P}_2}B})$. Denoting $\boldsymbol{y}_{i, \mathrm{P}_2} = \mathrm{vec}(\boldsymbol{Y}_{i, \mathrm{P}_2}) $, we have $\boldsymbol{y}_{i, \mathrm{P}_2} = \boldsymbol{X} \boldsymbol{h}_{\mathrm{B}, i} ( \mathcal{B}, \boldsymbol{p}_{\mathrm{U}_i} ) + \boldsymbol{n}_{i, \mathrm{P}_2}$, where $\boldsymbol{h}_{\mathrm{B}, i} ( \mathcal{B}, \boldsymbol{p}_{\mathrm{U}_i}) = \mathrm{vec} (\boldsymbol{H}_{\mathrm{B}, i} ( \mathcal{B}, \boldsymbol{p}_{\mathrm{U}_i}))$, and $\boldsymbol{X} = ( \boldsymbol{X}_{\mathrm{P}_2}^{\mathrm{T}} \otimes \boldsymbol{I}_{N_{\mathrm{U}_i}} )$. 

Since the end-to-end channel $\boldsymbol{H}_{\mathrm{B}, i} ( \mathcal{B}, \boldsymbol{p}_{\mathrm{U}_i})$ includes the NLOS components of BS-UE direct link and the NLOS components of RIS-UE links, which are unstructured components, channel estimators that exploit the channel sparsity are not suitable. Therefore, we consider two classical channel estimators: the maximum-likelihood (ML) channel estimator which does not require the prior statistics of the channel (such as the mean $\bar{\boldsymbol{h}}_{\mathrm{B}, i}$ and the covariance $\boldsymbol{R}_{\mathrm{B}, i}$), and the channel estimator which utilizes such prior information. In particular, thanks to the localization performed in Phase I, some partial channel state information is obtained to exploit the MMSE channel estimation method to improve the channel estimation accuracy.

\subsubsection{ML channel estimator}
ML channel estimation can be applied if $T_{\mathrm{P}_2}B \ge N_{\mathrm{B}}$. Specifically,
\begin{align}
\hat{\boldsymbol{h}}_{\mathrm{B},i}^{(\mathrm{ML})} = (\boldsymbol{X}^{\mathrm{H}} \boldsymbol{X} )^{-1} \boldsymbol{X}^{\mathrm{H}} \boldsymbol{y}_{i, \mathrm{P}_2}. \label{eqCHLS}
\end{align}
The associated channel estimation error $\Delta \boldsymbol{h}_{\mathrm{B}, i}^{(\mathrm{ML})}$ is independent of $\boldsymbol{h}_{\mathrm{B}, i} \left( \mathcal{B} \right)$, and its distribution is $\mathcal{CN} (\boldsymbol{0}, \boldsymbol{E}_{i}^{(\mathrm{ML})})$, where
\begin{align}
\boldsymbol{E}_{i}^{(\mathrm{ML})} = \sigma_n^2 ( \boldsymbol{X}^{\mathrm{H}} \boldsymbol{X} )^{-1}. \label{eqLSErr}
\end{align}

\subsubsection{LMMSE channel estimator}
According to Section \ref{sec_model}, given the optimized RIS phase profile $\mathcal{B}$ and the UE location $\boldsymbol{p}_{\mathrm{U}_i}$, the mean vector $\bar{\boldsymbol{h}}_{\mathrm{B}, i} ( \mathcal{B}, \boldsymbol{p}_{\mathrm{U}_i})$ and the covariance matrix $\boldsymbol{R}_{\mathrm{B}, i} ( \mathcal{B}, \boldsymbol{p}_{\mathrm{U}_i})$ of the channel vector $\boldsymbol{h}_{\mathrm{B}, i} ( \mathcal{B}, \boldsymbol{p}_{\mathrm{U}_i})$ are given by
\begin{align}
\bar{\boldsymbol{h}}_{\mathrm{B}, i} ( \mathcal{B}, \boldsymbol{p}_{\mathrm{U}_i}) = & \sum_{k=1}^K \bar{\boldsymbol{h}}_{k, i} \left(\boldsymbol{b}_k, \boldsymbol{p}_{\mathrm{U}_i}\right), \label{eqmeanh} \\
\boldsymbol{R}_{\mathrm{B}, i} ( \mathcal{B}, \boldsymbol{p}_{\mathrm{U}_i}) = & \sigma_{\mathrm{B}, i}^{2}  \boldsymbol{I}_{N_{\mathrm{B}}N_{\mathrm{U}_i}} + \sum_{k=1}^K \boldsymbol{R}_{k, i} \left(\boldsymbol{b}_k\right), \label{eqVarh}
\end{align}
where $\bar{\boldsymbol{h}}_{k, i} \left(\boldsymbol{b}_k, \boldsymbol{p}_{\mathrm{U}_i}\right)$ and $\boldsymbol{R}_{k, i} \left(\boldsymbol{b}_k \right)$ are defined in \eqref{eq:meanCHBRU} and \eqref{eq:varCHBRU}, respectively.

Based on the location estimates in Section \ref{sec_UELocEst}, the mean of the channel given the optimized RIS phase profile $\mathcal{B}$ is obtained by marginalizing \eqref{eqmeanh} with respect to the UE positions, as
\begin{align}
\bar{\boldsymbol{h}}_{\mathrm{B}, i} ( \mathcal{B}) = & \sum_{k=1}^K \int \bar{\boldsymbol{h}}_{k, i} \left(\boldsymbol{b}_k, \boldsymbol{p}_{\mathrm{U}_i}\right) f \left( \boldsymbol{p}_{\mathrm{U}_i} \right) \text{d} \boldsymbol{p}_{\mathrm{U}_i}. \label{eqmh}
\end{align}
Here, $f( \boldsymbol{p}_{\mathrm{U}_i} )$ is the probability density function of the $i$-th UE location, which is assumed to follow a Gaussian distribution with mean $\boldsymbol{\hat p}_{\mathrm{U}_i, 0}$ and covariance matrix ${\boldsymbol{\Sigma}}_{i,0}$. Similarly, the covariance matrix of the channel vector is
\begin{align}
\boldsymbol{R}_{\mathrm{B}, i} ( \mathcal{B} )
= & \int \mathrm{E} \left\{ \boldsymbol{h}_{\mathrm{B}, i} ( \mathcal{B}, \boldsymbol{p}_{\mathrm{U}_i}) \boldsymbol{h}_{\mathrm{B}, i}^{\mathrm{H}} ( \mathcal{B}, \boldsymbol{p}_{\mathrm{U}_i}) \right\} f\left( \boldsymbol{p}_{\mathrm{U}_i} \right) \text{d} \boldsymbol{p}_{\mathrm{U}_i} \notag \\
& \hspace{34mm}-\bar{\boldsymbol{h}}_{\mathrm{B}, i} ( \mathcal{B})\bar{\boldsymbol{h}}_{\mathrm{B}, i}^{\mathrm{H}} ( \mathcal{B}). \label{eqvarh}
\end{align}

The mean and covariance in \eqref{eqmh} and \eqref{eqvarh} can be computed numerically given the statistical characterization of the UE positions. Also, when $\sqrt{\mathrm{Tr} ( {\boldsymbol{\Sigma}}_{i,0} ) }$ is small, we can assume, based on Appendix \ref{GaussianAssumption}, that $\boldsymbol{h}_{\mathrm{B}, i} ( \mathcal{B}, \boldsymbol{p}_{\mathrm{U}_i})$ has a Gaussian distribution. We have the following proposition.

% \textcolor{blue}{Good notes: We need to justify in which condition that the assumption is reasonable. Also, we can use the numerical results to show that the assumption is OK. (This is important as the rate computation requires the Gaussian justification.)}

\begin{proposition}\label{prop1}
Assume the channel vector $\boldsymbol{h}_{\mathrm{B}, i} ( \mathcal{B}, \boldsymbol{p}_{\mathrm{U}_i} ) \sim \mathcal{CN} (\bar{\boldsymbol{h}}_{\mathrm{B}, i} ( \mathcal{B}), \boldsymbol{R}_{\mathrm{B}, i} ( \mathcal{B} ))$, the LMMSE estimator for $\boldsymbol{h}_{\mathrm{B}, i}$ is given by
\begin{align}
\hat{\boldsymbol{h}}_{\mathrm{B}, i}^{(\mathrm{MMSE})} = & \boldsymbol{\Lambda}_i \left( \boldsymbol{y}_{i, \mathrm{P}_2} - \boldsymbol{X}\bar{\boldsymbol{h}}_{\mathrm{B}, i} \right) + \bar{\boldsymbol{h}}_{\mathrm{B}, i}, \label{eqLMMSE}
\end{align}
where $\boldsymbol{\Lambda}_i =  \boldsymbol{R}_{\mathrm{B}, i} \boldsymbol{X}^{\mathrm{H}} ( \boldsymbol{X} \boldsymbol{R}_{\mathrm{B}, i} \boldsymbol{X}^{\mathrm{H}} + \sigma_n^2 \boldsymbol{I}_{N_{\mathrm{U}_i}T_{\mathrm{P}_2}} )^{-1}$ %is given by
%\begin{align}
%\boldsymbol{\Lambda}_i = & \boldsymbol{R}_{\mathrm{B}, i} \boldsymbol{S}^{\mathrm{H}} \left( \boldsymbol{S} \boldsymbol{R}_{\mathrm{B}, i} \boldsymbol{S}^{\mathrm{H}} + \sigma_n^2 \boldsymbol{I}_{N_{\mathrm{U}_i}T_{\mathrm{P}_2}} \right)^{-1}, \label{eqLMMSEMat}
%\end{align}
and the mean-square error $\Delta \boldsymbol{h}_{\mathrm{B}, i}^{(\mathrm{MMSE})}$ satisfies $\mathrm{E}\{\Delta \boldsymbol{h}_{\mathrm{B}, i}^{(\mathrm{MMSE})} (\hat{\boldsymbol{h}}_{\mathrm{B}, i}^{(\mathrm{MMSE})})^{\mathrm{H}} \} = \boldsymbol{0}$ %$\hat{\boldsymbol{h}}_{\mathrm{B}, i}^{(\mathrm{MMSE})}$,
with $\mathcal{CN}(\boldsymbol{0}, \boldsymbol{E}_{i}^{(\mathrm{MMSE})})$ where
\begin{align}
\boldsymbol{E}_{i}^{(\mathrm{MMSE})} = \boldsymbol{R}_{\mathrm{B}, i} - \boldsymbol{\Lambda}_i \boldsymbol{X} \boldsymbol{R}_{\mathrm{B}, i}. \label{eqMMSEErr}
\end{align}
\end{proposition}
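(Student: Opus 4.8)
The plan is to treat this as the standard affine LMMSE (Wiener) estimation problem and to invoke the orthogonality principle. First I would observe that, under the proposition's hypothesis, $\boldsymbol{h}_{\mathrm{B},i}(\mathcal{B})\sim\mathcal{CN}(\bar{\boldsymbol{h}}_{\mathrm{B},i},\boldsymbol{R}_{\mathrm{B},i})$ is Gaussian and statistically independent of the noise $\boldsymbol{n}_{i,\mathrm{P}_2}\sim\mathcal{CN}(\boldsymbol{0},\sigma_n^2\boldsymbol{I})$. Since the observation model $\boldsymbol{y}_{i,\mathrm{P}_2}=\boldsymbol{X}\boldsymbol{h}_{\mathrm{B},i}+\boldsymbol{n}_{i,\mathrm{P}_2}$ is linear in these two independent Gaussian vectors, the pair $(\boldsymbol{h}_{\mathrm{B},i},\boldsymbol{y}_{i,\mathrm{P}_2})$ is jointly Gaussian. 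Consequently the conditional mean $\mathrm{E}\{\boldsymbol{h}_{\mathrm{B},i}\mid\boldsymbol{y}_{i,\mathrm{P}_2}\}$ is an affine function of the observation and coincides with the LMMSE estimator, so it suffices to compute the relevant second-order moments.

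Next I would compute the moments needed for the affine estimator. Independence of $\boldsymbol{h}_{\mathrm{B},i}$ and $\boldsymbol{n}_{i,\mathrm{P}_2}$ makes all cross terms vanish, giving the mean $\mathrm{E}\{\boldsymbol{y}_{i,\mathrm{P}_2}\}=\boldsymbol{X}\bar{\boldsymbol{h}}_{\mathrm{B},i}$, the observation covariance $\boldsymbol{R}_{yy}=\boldsymbol{X}\boldsymbol{R}_{\mathrm{B},i}\boldsymbol{X}^{\mathrm{H}}+\sigma_n^2\boldsymbol{I}$, and the cross-covariance $\boldsymbol{R}_{hy}=\boldsymbol{R}_{\mathrm{B},i}\boldsymbol{X}^{\mathrm{H}}$. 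Substituting these into the canonical affine MMSE expression $\hat{\boldsymbol{h}}=\bar{\boldsymbol{h}}_{\mathrm{B},i}+\boldsymbol{R}_{hy}\boldsymbol{R}_{yy}^{-1}(\boldsymbol{y}_{i,\mathrm{P}_2}-\mathrm{E}\{\boldsymbol{y}_{i,\mathrm{P}_2}\})$ reproduces \eqref{eqLMMSE} and identifies the gain $\boldsymbol{\Lambda}_i=\boldsymbol{R}_{\mathrm{B},i}\boldsymbol{X}^{\mathrm{H}}(\boldsymbol{X}\boldsymbol{R}_{\mathrm{B},i}\boldsymbol{X}^{\mathrm{H}}+\sigma_n^2\boldsymbol{I})^{-1}$.

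Finally I would establish the two error properties. The uncorrelatedness $\mathrm{E}\{\Delta\boldsymbol{h}_{\mathrm{B},i}^{(\mathrm{MMSE})}(\hat{\boldsymbol{h}}_{\mathrm{B},i}^{(\mathrm{MMSE})})^{\mathrm{H}}\}=\boldsymbol{0}$ follows from the orthogonality principle: by the choice of $\boldsymbol{\Lambda}_i$ the error is uncorrelated with the centered observation $\boldsymbol{y}_{i,\mathrm{P}_2}-\mathrm{E}\{\boldsymbol{y}_{i,\mathrm{P}_2}\}$, and since the estimate is an affine function of that observation the claim is immediate. The error covariance follows from the Schur-complement formula for the conditional covariance of jointly Gaussian vectors, $\boldsymbol{E}_i^{(\mathrm{MMSE})}=\boldsymbol{R}_{hh}-\boldsymbol{R}_{hy}\boldsymbol{R}_{yy}^{-1}\boldsymbol{R}_{yh}$, which collapses to $\boldsymbol{R}_{\mathrm{B},i}-\boldsymbol{\Lambda}_i\boldsymbol{X}\boldsymbol{R}_{\mathrm{B},i}$ as in \eqref{eqMMSEErr}; the Gaussianity of the error is inherited from the joint Gaussianity of $(\boldsymbol{h}_{\mathrm{B},i},\boldsymbol{y}_{i,\mathrm{P}_2})$. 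Because the mechanics are entirely standard, I do not expect a genuine obstacle; the only point deserving care is the justification of the Gaussian hypothesis itself, namely that the position-marginalized channel of \eqref{eqmh}--\eqref{eqvarh} is adequately Gaussian, but this is assumed in the statement and supported by Appendix \ref{GaussianAssumption}, so it is a modeling premise rather than part of this derivation.
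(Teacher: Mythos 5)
Your proposal is correct and coincides with the paper's own treatment: the paper does not spell out a derivation but simply defers to the standard references \cite{Kay:93, Pan21}, whose argument is exactly the joint-Gaussianity, conditional-mean, orthogonality-principle, and Schur-complement computation you give. Nothing further is needed.
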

% \begin{proof}
% The proof can be found in Appendix \ref{AppendixA}
% \end{proof}
The proof of Proposition \ref{prop1} is available in \cite{Kay:93, Pan21}. The relation between the ML and LMMSE estimators is elaborated in the following proposition.
\begin{proposition} \label{prop2}
Let $\Delta\boldsymbol{E}_i = \boldsymbol{E}_{i}^{(\mathrm{ML})} - \boldsymbol{E}_{i}^{(\mathrm{MMSE})}$ where $\boldsymbol{E}_{i}^{(\mathrm{MMSE})}$ and $\boldsymbol{E}_{i}^{(\mathrm{ML})}$ are the covariance matrices of the channel estimation error given by \eqref{eqMMSEErr} and \eqref{eqLSErr}, respectively. We then have $\Delta\boldsymbol{E}_i  \succ \boldsymbol{0}$, i.e., $\Delta\boldsymbol{E}_i$ is positive definite. 
\end{proposition}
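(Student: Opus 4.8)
The plan is to convert both error covariances into their \emph{information} (inverse) form and then appeal to the order-reversing property of matrix inversion on the cone of positive definite matrices. First, I would rewrite the ML error covariance \eqref{eqLSErr} as $\boldsymbol{E}_i^{(\mathrm{ML})} = \big(\tfrac{1}{\sigma_n^2}\boldsymbol{X}^{\mathrm{H}}\boldsymbol{X}\big)^{-1}$, which is legitimate because the very existence of the ML estimator \eqref{eqCHLS} presupposes that $\boldsymbol{X}^{\mathrm{H}}\boldsymbol{X}$ is invertible, i.e. $\boldsymbol{X}$ has full column rank (equivalently $T_{\mathrm{P}_2}B\ge N_{\mathrm{B}}$), so that $\tfrac{1}{\sigma_n^2}\boldsymbol{X}^{\mathrm{H}}\boldsymbol{X}\succ\boldsymbol{0}$.

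Next, I would simplify the LMMSE error covariance \eqref{eqMMSEErr}. Substituting the gain $\boldsymbol{\Lambda}_i$ yields $\boldsymbol{E}_i^{(\mathrm{MMSE})} = \boldsymbol{R}_{\mathrm{B},i} - \boldsymbol{R}_{\mathrm{B},i}\boldsymbol{X}^{\mathrm{H}}\big(\boldsymbol{X}\boldsymbol{R}_{\mathrm{B},i}\boldsymbol{X}^{\mathrm{H}} + \sigma_n^2\boldsymbol{I}\big)^{-1}\boldsymbol{X}\boldsymbol{R}_{\mathrm{B},i}$, which is precisely the form to which the Woodbury matrix identity applies. Invoking it turns the expression into the information form $\boldsymbol{E}_i^{(\mathrm{MMSE})} = \big(\boldsymbol{R}_{\mathrm{B},i}^{-1} + \tfrac{1}{\sigma_n^2}\boldsymbol{X}^{\mathrm{H}}\boldsymbol{X}\big)^{-1}$. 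This step is where the main care is needed: it requires $\boldsymbol{R}_{\mathrm{B},i}$ to be invertible. I would justify this from \eqref{eqVarh}, where the direct-link NLOS contribution supplies a strictly positive multiple of the identity, so that $\boldsymbol{R}_{\mathrm{B},i}\succeq\sigma_{\mathrm{B},i}\boldsymbol{I}\succ\boldsymbol{0}$ regardless of the RIS-related summands (which are themselves positive semidefinite); hence $\boldsymbol{R}_{\mathrm{B},i}^{-1}\succ\boldsymbol{0}$ exists.

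With both covariances in information form the conclusion is immediate. Writing $\boldsymbol{N} = \tfrac{1}{\sigma_n^2}\boldsymbol{X}^{\mathrm{H}}\boldsymbol{X}$ and $\boldsymbol{M} = \boldsymbol{R}_{\mathrm{B},i}^{-1} + \boldsymbol{N}$, I would note that $\boldsymbol{M} - \boldsymbol{N} = \boldsymbol{R}_{\mathrm{B},i}^{-1}\succ\boldsymbol{0}$, so that $\boldsymbol{M}\succ\boldsymbol{N}\succ\boldsymbol{0}$. The order-reversing (anti-monotone) property of matrix inversion on positive definite matrices then gives $\boldsymbol{N}^{-1}\succ\boldsymbol{M}^{-1}$, i.e. $\boldsymbol{E}_i^{(\mathrm{ML})}\succ\boldsymbol{E}_i^{(\mathrm{MMSE})}$, which is exactly $\Delta\boldsymbol{E}_i\succ\boldsymbol{0}$. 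If a self-contained argument for the strict anti-monotonicity is desired, I would insert the standard congruence reduction: with $\boldsymbol{C} = \boldsymbol{N}^{-1/2}\boldsymbol{M}\boldsymbol{N}^{-1/2}$ one has $\boldsymbol{C}\succ\boldsymbol{I}$, so every eigenvalue of $\boldsymbol{C}$ exceeds one, whence $\boldsymbol{C}^{-1}\prec\boldsymbol{I}$, and undoing the congruence gives $\boldsymbol{M}^{-1}\prec\boldsymbol{N}^{-1}$.

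I expect the only genuine obstacle to be the justification of the Woodbury step, namely establishing the positive definiteness of $\boldsymbol{R}_{\mathrm{B},i}$; everything else is a mechanical rearrangement followed by a textbook monotonicity fact. It is worth remarking that the result is intuitively expected, since the LMMSE estimator exploits the prior statistics $(\bar{\boldsymbol{h}}_{\mathrm{B},i},\boldsymbol{R}_{\mathrm{B},i})$ that the ML estimator discards, and the \emph{strictness} of the inequality reflects that this prior information is always nondegenerate whenever $\boldsymbol{R}_{\mathrm{B},i}^{-1}\succ\boldsymbol{0}$.
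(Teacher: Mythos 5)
Your proof is correct, but it takes a genuinely different route from the paper's. The paper never invokes Loewner anti-monotonicity: it first uses the push-through identity $\boldsymbol{X}^{\mathrm{H}}(\boldsymbol{X}\boldsymbol{R}\boldsymbol{X}^{\mathrm{H}}+\sigma^2\boldsymbol{I})^{-1}=(\boldsymbol{X}^{\mathrm{H}}\boldsymbol{X}\boldsymbol{R}+\sigma^2\boldsymbol{I})^{-1}\boldsymbol{X}^{\mathrm{H}}$ to reduce the LMMSE error covariance to $\sigma_n^2\boldsymbol{R}_{\mathrm{B},i}\left(\boldsymbol{X}^{\mathrm{H}}\boldsymbol{X}\boldsymbol{R}_{\mathrm{B},i}+\sigma_n^2\boldsymbol{I}\right)^{-1}$ (its Eq.~\eqref{eqMMSEErrSimplified}), and then manipulates $\boldsymbol{E}_i^{(\mathrm{ML})}$ so that the difference comes out in closed form, $\Delta\boldsymbol{E}_i=\sigma_n^4\left(\boldsymbol{X}^{\mathrm{H}}\boldsymbol{X}\boldsymbol{R}_{\mathrm{B},i}\boldsymbol{X}^{\mathrm{H}}\boldsymbol{X}+\sigma_n^2\boldsymbol{X}^{\mathrm{H}}\boldsymbol{X}\right)^{-1}$, whose positive definiteness is read off directly. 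You instead put both errors in information form --- $\boldsymbol{E}_i^{(\mathrm{ML})}=\bigl(\tfrac{1}{\sigma_n^2}\boldsymbol{X}^{\mathrm{H}}\boldsymbol{X}\bigr)^{-1}$ and, via Woodbury, $\boldsymbol{E}_i^{(\mathrm{MMSE})}=\bigl(\boldsymbol{R}_{\mathrm{B},i}^{-1}+\tfrac{1}{\sigma_n^2}\boldsymbol{X}^{\mathrm{H}}\boldsymbol{X}\bigr)^{-1}$ --- and conclude by strict anti-monotonicity of matrix inversion, which you prove self-containedly. What the paper's computation buys is an explicit expression for the gap, useful for quantifying how much the prior helps (e.g., how $\Delta\boldsymbol{E}_i$ scales with $\sigma_n^2$ and the pilot Gram matrix); what yours buys is the structural reading ``LMMSE precision equals ML precision plus prior precision,'' which makes the strictness transparent, and a more careful audit of hypotheses: you justify the invertibility of $\boldsymbol{R}_{\mathrm{B},i}$ from the direct-link term, whereas the paper uses $\boldsymbol{R}_{\mathrm{B},i}^{-1}$ without comment (both proofs in fact need exactly the same two assumptions, full column rank of $\boldsymbol{X}$ and $\boldsymbol{R}_{\mathrm{B},i}\succ\boldsymbol{0}$). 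One small citation slip on your side: the covariance entering the LMMSE estimator of Proposition~\ref{prop1} is the position-marginalized $\boldsymbol{R}_{\mathrm{B},i}(\mathcal{B})$ of \eqref{eqvarh}, not the conditional covariance \eqref{eqVarh} that you cite; your bound survives nonetheless, since by the law of total covariance the marginal covariance dominates the average of the conditional ones and hence still dominates $\sigma_{\mathrm{B},i}\boldsymbol{I}\succ\boldsymbol{0}$.
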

\begin{proof}
The proof immediately follows from \eqref{eqLSErr} and \eqref{eqMMSEErr}.

\end{proof}
Proposition \ref{prop2} indicates that, thanks to the knowledge of the statistics of the channel, we can always achieve better channel estimates with the LMMSE estimator compared with the ML estimator. Therefore, we consider the LMMSE channel estimator in the following.

\subsection{Conditional Achievable Rate}
During the data transmission phase, the signal received at the $i$-th UE in the presence of $N_{\mathrm{U}}$ concurrent transmitted streams is given by
\begin{align}
\boldsymbol{y}_i^{(\mathrm{D})} = & \boldsymbol{H}_{\mathrm{B}, i} \boldsymbol{x}_i + \boldsymbol{H}_{\mathrm{B}, i} \sum_{j=1, j\ne i}^{N_{\mathrm{U}}} \boldsymbol{x}_j  + \boldsymbol{n}_i^{(\mathrm{D})}, \label{eqRecData}
\end{align} 
where $\boldsymbol{x}_j = \boldsymbol{V}_j\boldsymbol{s}_j^{(\mathrm{D})}$ is the transmitted precoded vector, $\boldsymbol{V}_j$ is the precoding matrix, and $\boldsymbol{s}_j^{(\mathrm{D})}$ is the vector of transmitted symbols to the $j$-th UE, with normalized power, i.e., $\mathrm{E} \{ \boldsymbol{s}_j^{(\mathrm{D})} (\boldsymbol{s}_j^{(\mathrm{D})})^{\mathrm{H}} \} = \boldsymbol{I}_{N_{\mathrm{U}}}$. The noise component $\boldsymbol{n}_i^{(\mathrm{D})}$ is modeled as $\mathcal{CN}(\boldsymbol{0}, \sigma_n^2 \boldsymbol{I}_{N_{\mathrm{U}}})$. 
\subsubsection{Transmitter with complete channel knowledge}
When the complete CSI is known at the transmitter, the achievable rate of the $i$-th UE can be obtained as \cite{abrardo2020intelligent}
\begin{align}
R_i \left(\mathcal{V}, \mathcal{B} \right) = & \log \det \left(\boldsymbol{I}_{N_{\mathrm{U}}} + \boldsymbol{V}_i^{\mathrm{H}} \boldsymbol{H}_{\mathrm{B}, i}^{\mathrm{H}} \left(\mathcal{B} \right) \bar{\boldsymbol{J}}_i^{-1} \boldsymbol{H}_{\mathrm{B}, i}\left(\mathcal{B} \right) \boldsymbol{V}_i \right), \label{eqARPerfectCH}
\end{align}
with $\bar{\boldsymbol{J}}_i =  \sum_{j=1, j\ne i}^{N_{\mathrm{U}}} \boldsymbol{H}_{\mathrm{B}, j} \boldsymbol{V}_j \boldsymbol{V}_j^{\mathrm{H}} \boldsymbol{H}_{\mathrm{B}, j}^{\mathrm{H}} + \sigma_n^2 \boldsymbol{I}_{N_{\mathrm{U}}}$, where $\mathcal{V}=\left\{\mathbf{V}_{1},\mathbf{V}_{2},\dots,\mathbf{V}_{N_{u}}\right\}$ denotes the set of precoding matrices of all the $N_{\mathrm{U}}$ UEs.

\subsubsection{Transmitter with LMMSE channel estimates}
When the LMMSE channel estimates are available at the transmitter, we rewrite \eqref{eqRecData} as
\begin{align}
\boldsymbol{y}_i^{(D)} = & \big( \hat{\boldsymbol{H}}_{\mathrm{B}, i}^{(\mathrm{MMSE})} + \Delta \boldsymbol{H}_{\mathrm{B}, i}^{(\mathrm{MMSE})} \big) \sum_{j=1}^{N_{\mathrm{U}}} \boldsymbol{x}_j + \boldsymbol{n}_i^{(\mathrm{D})} \notag\\
= & \hat{\boldsymbol{H}}_{\mathrm{B}, i}^{(\mathrm{MMSE})} \boldsymbol{x}_i + \tilde{\boldsymbol{n}}_i, \label{eqRecMMSEchest}
\end{align}
where $\hat{\boldsymbol{h}}_{\mathrm{B}, i}^{(\mathrm{MMSE})} = \mathrm{vec} (\hat{\boldsymbol{H}}_{\mathrm{B}, i}^{(\mathrm{MMSE})} )$ is given in \eqref{eqLMMSE}, and $\Delta \boldsymbol{h}_{\mathrm{B}, i}^{(\mathrm{MMSE})} = \mathrm{vec} (\Delta {\boldsymbol{H}}_{\mathrm{B}, i}^{(\mathrm{MMSE})} )$ is the LMMSE channel estimation error vector. $\tilde{\boldsymbol{n}}_i =  \Delta \boldsymbol{H}_{\mathrm{B}, i}^{(\mathrm{MMSE})} \boldsymbol{x}_i + \boldsymbol{H}_{\mathrm{B}, i} \sum_{j=1, j\ne i}^{N_{\mathrm{U}}} \boldsymbol{x}_j  + \boldsymbol{n}_i^{(\mathrm{D})}$ is the equivalent noise-plus-interference component. The corresponding conditional achievable rate is given next.
\begin{proposition} \label{prop3}
With the LMMSE channel estimates $\hat{\boldsymbol{h}}_{\mathrm{B}, i}^{(\mathrm{MMSE})}$ given in \eqref{eqLMMSE}, the conditional achievable rate of the $i$-th UE is
\begin{align}
R_i \left(\mathcal{V}, \mathcal{B} \right) = & \log \det (\boldsymbol{I}_{N_{\mathrm{U}}} + \boldsymbol{V}_i^{\mathrm{H}} \hat{\boldsymbol{H}}_{\mathrm{B}, i}^{\mathrm{H}} (\mathcal{B}) \tilde{\boldsymbol{J}}_i^{-1} \hat{\boldsymbol{H}}_{\mathrm{B}, i}(\mathcal{B}) \boldsymbol{V}_i ), \label{eqARmmseCH}
\end{align}
with $\tilde{\boldsymbol{J}}_i$ given by
\begin{align}
\tilde{\boldsymbol{J}}_i = & \sigma_n^2 \boldsymbol{I}_{N_{\mathrm{U}}} + E_{\mathrm{s}}\sum_{m=1}^{N_{\mathrm{B}}} \sum_{n=1}^{N_{\mathrm{B}}} \Pi_i^{(m,n)} \boldsymbol{E}_i^{(m, n)} \notag \\ & +E_{\mathrm{s}} \sum_{m=1}^{N_{\mathrm{B}}} \sum_{n=1}^{N_{\mathrm{B}}} \big( \sum_{j=1, j\ne i}^{N_{\mathrm{U}}} \Pi_j^{(m,n)} \big) \boldsymbol{R}_{\mathrm{B}, i}^{(m, n)} , \label{eqtildeJi}
\end{align}
where $\boldsymbol{E}_i^{(m,n)} = \mathrm{E} \{\Delta \boldsymbol{h}_{\mathrm{B}, i}^{(m)} ( \Delta \boldsymbol{h}_{\mathrm{B}, i}^{(n)} )^{\mathrm{H}} \}$, $\boldsymbol{R}_{\mathrm{B}, i}^{(m, n)} = \mathrm{E} \{ \boldsymbol{h}_{\mathrm{B}, i}^{(m)} (\mathcal{B}, \boldsymbol{p}_{\mathrm{U}_i}) (\boldsymbol{h}_{\mathrm{B}, i}^{(m)} (\mathcal{B}, \boldsymbol{p}_{\mathrm{U}_i}))^{\mathrm{H}} \}$, and $\Pi_i^{(m,n)}$ is the ($m,n$)-th element in $\boldsymbol{\Pi}_i= \boldsymbol{V}_i \boldsymbol{V}_i^{\mathrm{H}}$.
\end{proposition}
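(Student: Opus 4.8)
The plan is to read (\ref{eqARmmseCH}) as the mutual information of an effective point-to-point MIMO link in which the useful signal is carried by the \emph{known} estimate $\hat{\boldsymbol{H}}_{\mathrm{B}, i}^{(\mathrm{MMSE})}$, while everything else is lumped into a colored Gaussian disturbance. Concretely, from (\ref{eqRecMMSEchest}) I would write $\boldsymbol{y}_i^{(\mathrm{D})} = \hat{\boldsymbol{H}}_{\mathrm{B}, i}^{(\mathrm{MMSE})}\boldsymbol{V}_i\boldsymbol{s}_i^{(\mathrm{D})} + \tilde{\boldsymbol{n}}_i$, with $\tilde{\boldsymbol{n}}_i = \Delta\boldsymbol{H}_{\mathrm{B}, i}^{(\mathrm{MMSE})}\boldsymbol{x}_i + \boldsymbol{H}_{\mathrm{B}, i}\sum_{j\ne i}\boldsymbol{x}_j + \boldsymbol{n}_i^{(\mathrm{D})}$. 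Invoking the standard worst-case-Gaussian argument (for a fixed second-order statistic, i.i.d.\ Gaussian noise minimizes the mutual information and hence yields an achievable lower bound), the rate is $\log\det(\boldsymbol{I}_{N_{\mathrm{U}}} + \hat{\boldsymbol{H}}_{\mathrm{B}, i}\boldsymbol{V}_i\boldsymbol{V}_i^{\mathrm{H}}\hat{\boldsymbol{H}}_{\mathrm{B}, i}^{\mathrm{H}}\tilde{\boldsymbol{J}}_i^{-1})$ with $\tilde{\boldsymbol{J}}_i = \mathrm{E}\{\tilde{\boldsymbol{n}}_i\tilde{\boldsymbol{n}}_i^{\mathrm{H}}\}$; the stated form (\ref{eqARmmseCH}) then follows from Sylvester's determinant identity, which pulls $\boldsymbol{V}_i^{\mathrm{H}}$ and $\boldsymbol{V}_i$ to the outside of the $\log\det$ (up to the per-stream symbol-energy normalization absorbed in $\boldsymbol{V}_i$).

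The substance is therefore the evaluation of $\tilde{\boldsymbol{J}}_i$, which I would carry out in two steps. First I would argue that the three summands of $\tilde{\boldsymbol{n}}_i$ are mutually uncorrelated, so that the cross-covariances drop and $\tilde{\boldsymbol{J}}_i$ splits into three additive contributions. This uses the LMMSE orthogonality of Proposition~\ref{prop1}, by which $\Delta\boldsymbol{H}_{\mathrm{B}, i}^{(\mathrm{MMSE})}$ is zero-mean and uncorrelated with the estimate and (being produced by the Phase-II pilot noise) independent of the Phase-III data symbols; the mutual independence and covariance $E_{\mathrm{s}}\boldsymbol{I}$ of the streams $\boldsymbol{s}_j^{(\mathrm{D})}$; and the independence of the thermal noise $\boldsymbol{n}_i^{(\mathrm{D})}$, which contributes $\sigma_n^2\boldsymbol{I}_{N_{\mathrm{U}}}$ outright.

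The core computation is the element-wise reduction of the two remaining quadratic forms. For the self-interference I would write $\Delta\boldsymbol{H}_{\mathrm{B}, i}^{(\mathrm{MMSE})}\boldsymbol{x}_i = \Delta\boldsymbol{H}_{\mathrm{B}, i}^{(\mathrm{MMSE})}\boldsymbol{V}_i\boldsymbol{s}_i^{(\mathrm{D})}$, average over $\boldsymbol{s}_i^{(\mathrm{D})}$ to obtain $E_{\mathrm{s}}\,\mathrm{E}\{\Delta\boldsymbol{H}_{\mathrm{B}, i}^{(\mathrm{MMSE})}\boldsymbol{\Pi}_i(\Delta\boldsymbol{H}_{\mathrm{B}, i}^{(\mathrm{MMSE})})^{\mathrm{H}}\}$, and then expand the matrix product column by column: writing the $(u,v)$ entry as $\sum_{m,n}\Pi_i^{(m,n)}\mathrm{E}\{[\Delta\boldsymbol{H}]_{u,m}[\Delta\boldsymbol{H}]_{v,n}^{*}\}$ and recognizing the inner expectation as the $(u,v)$ entry of $\boldsymbol{E}_i^{(m,n)}=\mathrm{E}\{\Delta\boldsymbol{h}_{\mathrm{B}, i}^{(m)}(\Delta\boldsymbol{h}_{\mathrm{B}, i}^{(n)})^{\mathrm{H}}\}$ gives exactly $E_{\mathrm{s}}\sum_{m,n}\Pi_i^{(m,n)}\boldsymbol{E}_i^{(m,n)}$, the second term of (\ref{eqtildeJi}). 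The inter-user term follows from the identical manipulation with $\boldsymbol{H}_{\mathrm{B}, i}$ and $\boldsymbol{V}_j$ in place of $\Delta\boldsymbol{H}_{\mathrm{B}, i}^{(\mathrm{MMSE})}$ and $\boldsymbol{V}_i$, summed over $j\ne i$, yielding $E_{\mathrm{s}}\sum_{m,n}(\sum_{j\ne i}\Pi_j^{(m,n)})\boldsymbol{R}_{\mathrm{B}, i}^{(m,n)}$.

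I expect the only real obstacle to be bookkeeping and modeling consistency rather than any hard inequality. One must keep the column-stacking convention of $\mathrm{vec}(\cdot)$ aligned so that the $N_{\mathrm{U}}\times N_{\mathrm{U}}$ blocks $\boldsymbol{E}_i^{(m,n)}$ and $\boldsymbol{R}_{\mathrm{B}, i}^{(m,n)}$ of the vectorized covariances match the $(m,n)$ entries of the precoder Gram matrices $\boldsymbol{\Pi}_i$ and $\boldsymbol{\Pi}_j$. More delicately, one must justify why the self-interference is weighted by the \emph{error} covariance $\boldsymbol{E}_i$ whereas the inter-user interference is weighted by the \emph{full} channel correlation $\boldsymbol{R}_{\mathrm{B}, i}$: the former because user $i$'s useful signal is referenced to its own estimate, so only the residual error leaks through; the latter because streams precoded for other users are treated as unknown interference passing through user $i$'s entire channel and are thus averaged over its full second-order statistics.
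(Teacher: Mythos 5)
Your proposal is correct and follows essentially the same route as the paper's proof: both verify that $\tilde{\boldsymbol{n}}_i$ is zero-mean, use the uncorrelatedness of the data streams (and of the channel-estimation error and thermal noise) to drop all cross-terms in $\tilde{\boldsymbol{J}}_i = \mathrm{E}\{\tilde{\boldsymbol{n}}_i\tilde{\boldsymbol{n}}_i^{\mathrm{H}}\}$, and then reduce the two remaining quadratic forms block-by-block to obtain $E_{\mathrm{s}}\sum_{m,n}\Pi_i^{(m,n)}\boldsymbol{E}_i^{(m,n)}$ and $E_{\mathrm{s}}\sum_{m,n}\big(\sum_{j\ne i}\Pi_j^{(m,n)}\big)\boldsymbol{R}_{\mathrm{B},i}^{(m,n)}$, exactly as in Appendix~\ref{ProofProp3}. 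The only addition on your side is the explicit worst-case-Gaussian-noise justification of the $\log\det$ achievability formula (and the Sylvester identity to rearrange it), a step the paper's appendix leaves implicit; this strengthens rather than departs from the paper's argument.
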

\begin{proof}
See Appendix \ref{ProofProp3}.
\end{proof}
Equations \eqref{eqARPerfectCH}\,and\,\eqref{eqARmmseCH} are used to optimize the RIS phase profile in Phase I and the BS precoding in Phase III, respectively. As for the optimization of the RIS phase profile, from the rate in \eqref{eqARPerfectCH} or \eqref{eqARmmseCH}, the 
conditional sum-rate in \eqref{P:sumRateMax1} can be formulated as $R_{\text{tot}}\left({\text{\boldmath{$\omega$}}},\mathcal{B}\right) = \sum_{i} R_i \left(\mathcal{V}, \mathcal{B} \right)$. Then, the integral in \eqref{P:sumRateMax1} is computed numerically, from a set of realizations for $\Omega$ and computing \eqref{eqARPerfectCH} or \eqref{eqARmmseCH} for each realization. The optimization of the BS precoding is discussed next.

\subsection{Optimal Precoder Design} \label{sec_PreOpt}
In this section, we compute the optimal precoding matrices by taking into account that the LMMSE channel estimates are available at the transmitter. To the best of our knowledge, this is not available in the open technical literature. On the other hand, the computation of the optimal precoding matrices with complete channel knowledge is discussed in \cite{abrardo2020intelligent}. The optimal precoding matrices can be derived according to the following sum-rate maximization problem
\begin{align}
\label{P:sumRateMax_pre}
&\max \limits_{\mathcal{V}} \sum\limits_{i = 1}^{N_{\mathrm{U}}} R_i \left(\mathcal{V}, \mathcal{B} \right) \\
\text{s.t. } & \text{Tr} \left(\mathbf{V}_i\mathbf{V}^{\mathrm{H}}_i \right) \le P_{i}, ~~ i=1,2,\dots,N_{\mathrm{U}}, \notag
\end{align}
where $P_i$ is the power budget of the $i$-th UE and $R_i \left(\mathcal{V}, \mathcal{B} \right)$ is defined in \eqref{eqARmmseCH}. To solve problem \eqref{P:sumRateMax_pre}, we utilize the iterative WMMSE algorithm. Let us introduce
the MSE matrix $\boldsymbol{E}_{i}(\mathcal{V},\boldsymbol{G}_{i}) = \text{E}_{\boldsymbol{s},\boldsymbol{n},\Delta \boldsymbol{H}_i}\big\{ ( {\boldsymbol{s}}_i - \boldsymbol{\tilde s}_i )( \boldsymbol{s}_i - \boldsymbol{\tilde s}_i )^{\mathrm{H}} \big\} $, where $\boldsymbol{\tilde s}_i =  \boldsymbol{G}^{\mathrm{H}}_{i} \boldsymbol{y}_{i}$ and $\boldsymbol{G}_{i}$ is the linear decoding matrix of the $i$th UE. The superscript `$(\mathrm{D})$' is omitted for simplicity. Assuming that the information symbols are zero-mean and {i.i.d.} RVs, i.e., $\mathrm{E}\left \{ \boldsymbol{s}_i \boldsymbol{s}_i^{\mathrm{H}}\right \} = \boldsymbol{I}_{N_{\mathrm{U}}}$ and $\mathrm{E}\left \{ \boldsymbol{s}_i\boldsymbol{s}_j^{\mathrm{H}}\right \} =  \mathbf{0}_{L}$ for $j \ne i$, we get
 \begin{equation}
\label{eq:mMSEMat0_15}
\begin{aligned}
\boldsymbol{E}_{i}(\mathcal{V},\boldsymbol{G}_{i}) & = (\boldsymbol{I} - \boldsymbol{G}_{i}^{\mathrm{H}} \hat{\boldsymbol{H}}_i \boldsymbol{V}_i )(\boldsymbol{I} - \boldsymbol{G}_{i}^{\mathrm{H}} \hat{\boldsymbol{H}}_i \boldsymbol{V}_i )^{\mathrm{H}} \notag \\ & \hspace{-8mm} + \sum_{j=1, j\ne i}^{N_\mathrm{U}} \boldsymbol{G}_{i}^{\mathrm{H}} \hat{\boldsymbol{H}}_i \boldsymbol{V}_j\boldsymbol{V}_j^{\mathrm{H}} \hat{\boldsymbol{H}}_{i}^{\mathrm{H}} \boldsymbol{G}_{i} \\
&\hspace{-8mm} + \sum_{j=1}^{N_\mathrm{U}} \text{E}_{\Delta \boldsymbol{H}_i}\{\boldsymbol{G}_{i}^{\mathrm{H}} \Delta{\boldsymbol{H}}_i \boldsymbol{V}_j\boldsymbol{V}_j^{\mathrm{H}} \Delta{\boldsymbol{H}}_i^{\mathrm{H}} \boldsymbol{G}_{i}\} + \sigma_n^2 \boldsymbol{G}_{i}^{\mathrm{H}} \boldsymbol{G}_{i} \\
& \hspace{-12mm} =  (\boldsymbol{I} - \boldsymbol{G}_{i}^{\mathrm{H}} \hat{\boldsymbol{H}}_i \boldsymbol{V}_i )(\boldsymbol{I} - \boldsymbol{G}_{i}^{\mathrm{H}} \hat{\boldsymbol{H}}_i \boldsymbol{V}_i )^{\mathrm{H}} \\ 
& \hspace{-8mm}+ \sum_{j=1, j\ne i}^{N_\mathrm{U}} \boldsymbol{G}_{i}^{\mathrm{H}} \hat{\boldsymbol{H}}_i \boldsymbol{V}_j\boldsymbol{V}_j^{\mathrm{H}} \hat{\boldsymbol{H}}_{i}^{\mathrm{H}} \boldsymbol{G}_{i} +  \boldsymbol{G}_{i}^{\mathrm{H}} (\boldsymbol{C}_i + \sigma_n^2 \boldsymbol{I})\boldsymbol{G}_{i}  
\end{aligned}
\end{equation}
where $\boldsymbol{C}_{i} = \sum_{j=1}^{N_\mathrm{U}} \mathrm{E}_{\Delta \boldsymbol{H}_i} \{ \Delta{\boldsymbol{H}}_i \boldsymbol{V}_j \boldsymbol{V}_j^{\mathrm{H}} \Delta{\boldsymbol{H}}_i^{\mathrm{H}} \}$ and $\mathrm{E}_{\Delta \boldsymbol{H}_i} \{ \Delta{\boldsymbol{H}}_i \boldsymbol{V}_j \boldsymbol{V}_j^{\mathrm{H}} \Delta{\boldsymbol{H}}_i^{\mathrm{H}} \}$ is given in \eqref{eqfirComp}. 

Problem \eqref{P:sumRateMax_pre} can be reformulated as the following MSE minimization problem
\begin{align}
&\min \limits_{\mathcal{V},\mathcal{W},\mathcal{G}} \sum\limits_{i = 1}^{N_{\mathrm{U}}} \big\{\text{Tr}\left[\boldsymbol{W}_{i}\boldsymbol{E}_{i}\left(\mathcal{V},\boldsymbol{G}_{i}\right)\right]-\log\det\left(\boldsymbol{W}_{i}\right)\big\} \label{wMMSE1_1}\\
&\quad \text{s.t.}
\quad \text{Tr} \left(\boldsymbol{V}_i\boldsymbol{V}^{\mathrm{H}}_i\right)\le P_{i}, ~~ i=1,2,\dots,N_{\mathrm{U}}, \notag
\end{align}
where $\boldsymbol{W}_{i}\succeq 0$ is the matrix of weights for the MSE of the $i$-th UE, and $\mathcal{W}=\{{\boldsymbol{W}}_{1},\boldsymbol{W}_{2},\dots,\boldsymbol{W}_{N_{\mathrm{U}}} \}$
and $\mathcal{G}=\left\{\boldsymbol{G}_{1},\boldsymbol{G}_{2},\dots,\boldsymbol{G}_{N_{\mathrm{U}}}\right\}$ are the sets of all weight and receive filter matrices, respectively.

The equivalence between problem \eqref{P:sumRateMax_pre} and \eqref{wMMSE1_1} follows by recalling the relation between the MMSE covariance $\hat{\boldsymbol{E}}_{i}(\mathcal{V}) = \min\limits_{\boldsymbol{G}_{i}}\boldsymbol{E}_{i}(\mathcal{V},\boldsymbol{G}_{i})$, the achievable rate %$I_i$, i.e., 
$ \log \det(\hat{\boldsymbol{E}}_{i} ^{-1}(\mathcal{V}))$ and by the fact that the optimal solution of \eqref{wMMSE1_1} is $\boldsymbol{W}_{i} = \hat{\boldsymbol{E}}^{-1}_{i} (\mathcal{V} )$. Further details can be found in \cite{Shi2011}.

Unfortunately, problem \eqref{wMMSE1_1} is non-convex. If all the optimization variables are fixed except one, however, it is a convex optimization problem in the remaining variables. Accordingly, the weighted MSE minimization problem in \eqref{wMMSE1_1} can be solved by using an iterative block coordinate descent (BCD) algorithm \cite{Bertsekas1999}. To elaborate, let us denote by  $\boldsymbol{G}_{i}^{(q+1)}$, $\boldsymbol{W}_{i}^{(q+1)}$ and $\boldsymbol{V}_{i}^{(q+1)}$ the optimization variables after the $(q+1)$-th iteration. Then, we have the following.
\begin{itemize}
\item Receive filter: The receive filter matrix can be computed as $\boldsymbol{G}_{i}^{(q+1)} = \arg\min \limits_{\boldsymbol{G}_i} \boldsymbol{E}_{i}(\mathcal{V}^{(q)},\boldsymbol{G}_{i})$, which yields $\boldsymbol{G}_{i}^{(q+1)}=  (\boldsymbol{J}_{i}^{(q)})^{-1}\hat{\boldsymbol{H}}_i\boldsymbol{V}_{i}^{(q)}$, in which $\boldsymbol{J}_{i}^{(q)}=\sum_{j = 1}^{N_{\mathrm{U}}}\hat{\boldsymbol{H}}_i\boldsymbol{V}_{j}^{(q)}(\boldsymbol{V}_{j}^{(q)})^{\mathrm{H}}\hat{\boldsymbol{H}}_{i}^{\mathrm{H}}  + \boldsymbol{C}_i + \sigma_n^2 \boldsymbol{I}$. The corresponding MMSE is $\boldsymbol{E}_{i}^{(q+1)} = \boldsymbol{E}_{i}(\mathcal{V},\boldsymbol{G}_{i}^{(q+1)})  = \boldsymbol{I} - \boldsymbol{V}^{\mathrm{H}}_i \hat{\boldsymbol{H}}^{\mathrm{H}}_i(\boldsymbol{J}_{i}^{(q)})^{-1} \hat{\boldsymbol{H}}_i  \boldsymbol{V}_i.$

\item Weights: The weights can be computed as $\boldsymbol{W}_{i}^{(q+1)}=(\boldsymbol{E}_{i}^{(q+1)})^{-1}$. 
\item Precoding filters: $\boldsymbol{V}_{i}^{q+1}$ can be computed solving the following problem
\begin{align*}
&\boldsymbol{V}_{i}^{q+1} = \arg\min \limits_{\mathcal{V}} \text{Tr}\Big(\boldsymbol{W}_{i}^{(q+1)} \left(\boldsymbol{I} - \boldsymbol{\Gamma}_{i,i}^{(q+1)} \right) (\boldsymbol{I} - \boldsymbol{\Gamma}_{i,i}^{(q+1)} )^{\mathrm{H}} \\
& + \sum_{j=1, j\ne i}^{N_\mathrm{U}} \boldsymbol{W}_{j}^{(q+1)}  \boldsymbol{\Gamma}_{j,i}^{(q+1)}\left( \boldsymbol{\Gamma}_{j,i}^{(q+1)} \right)^{\mathrm{H}} + \sum_{j=1}^{N_\mathrm{U}} \boldsymbol{W}_{j}^{(q+1)} \\ & \cdot \text{E}_{\Delta \boldsymbol{H}_j}\left\{\left(\boldsymbol{G}_{j}^{(q+1)}\right)^{\mathrm{H}} \Delta{\boldsymbol{H}}_j \boldsymbol{V}_i\boldsymbol{V}_i^{\mathrm{H}} \Delta{\boldsymbol{H}}_j^{\mathrm{H}} \boldsymbol{G}_{j}^{(q+1)}\right\}\Big) \numberthis \label{Prec_1}\\
&\quad \text{s.t.}
\quad \text{Tr} \left(\boldsymbol{V}_i\boldsymbol{V}^{\mathrm{H}}_i\right)\le P_{i}, ~~ i=1,2,\dots,N_{\mathrm{U}},  
\end{align*}
in which $\boldsymbol{\Gamma}_{i,j}^{(q)} = \big(\boldsymbol{G}_i^{(q)}\big)^{\mathrm{H}} \hat{\boldsymbol{H}}_i\boldsymbol{V}_j$. The solution to problem \eqref{Prec_1} is $\boldsymbol{V}_{i}^{(q+1)} = (\boldsymbol{K}^{(q+1)}+ \mu_i \boldsymbol{I}_{M} )^{-1}\hat{\boldsymbol{H}}_i^H\boldsymbol{G}_{i}^{(q+1)}\boldsymbol{W}_{i}^{(q+1)}$ where 
\begin{align*}
& \boldsymbol{K}^{(q+1)}= \sum\limits_{j = 1}^{N_{\mathrm{U}}} \hat{\boldsymbol{H}}_j^{\mathrm{H}} \boldsymbol{G}_{j}^{(q+1)} \boldsymbol{W}_{j}^{(q+1)} (\boldsymbol{G}_{j}^{(q+1)})^{\mathrm{H}} \hat{\boldsymbol{H}}_j \\
+ & \text{E}_{\Delta \boldsymbol{H}_j}\left\{\Delta{\boldsymbol{H}}_j^{\mathrm{H}}  \boldsymbol{G}_{j}^{(q+1)} \boldsymbol{W}_{j}^{(q+1)} \left( \boldsymbol{G}_{j}^{(q+1)} \right)^{\mathrm{H}} \Delta{\boldsymbol{H}}_j\right\}, \numberthis \label{Prec_2}
\end{align*}
and the Lagrange multiplier $ \mu_i$ in the optimization problem is chosen so that the power constraint in \eqref{wMMSE1_1} for the $i$-th UE is fulfilled.
\end{itemize}

\subsection{Complexity Analysis}
The proposed integrated localization and communication method encompasses four tasks: (1) localization, (2) RIS optimization, (3) channel estimation, and (4) BS precoder design. Channel estimation and BS precoder design result in the highest computational complexity, which originates from computing the inversion of matrices, whose sizes depend on the number of transmit and receive antennas. To perform localization, the proposed estimator applies two one-dimensional searches per RIS to find the AOA and the AOD, which are used to provide a coarse estimate of the location. Also, the refined estimate  of the position is obtained via an optimization in a three-dimensional space around the coarse location estimate. The computational complexity is determined by the optimization of the RIS phase shift profile, since multiple RISs equipped with a large number of elements are considered. Also, an iterative optimization method is applied in (11), whose computation complexity is discussed in \cite{abrardo2020intelligent}. However, the RISs need to be optimized on a longer timescale, i.e., every location coherence interval.

\section{Numerical Results and Discussion}\label{sec_results}
%----------------------------------------- TABLE I: Simulation Parameters -----------------------------------------------
\begin{table}%[!htb]
\caption{\textsc{Parameters in Simulations}}
\centering
    \begin{tabular}{c|c}
    \hline \hline
    \textbf{Parameters} & \textbf{Values} \\
    \hline
    Carrier frequency $f_\mathrm{c}$ & 28 GHz \\
    \hline
    Bandwidth $B$ & 120 KHz/36 MHz \\
    \hline
    Antenna array at BS & $8\times 2$ \\
    \hline
    Antenna array at UE & $2\times 2$ \\
    \hline 
    {Number of RISs, $K$} & 2 \\
    \hline
    {Number of unit cells per RIS, $P$} & $80 \times 40$ \\
    \hline 
    Antenna/cell spacing $\Delta d$ & Half wavelength \\
    \hline
    BS location $\boldsymbol{p}_{\mathrm{B}}$ & $[60, 15, 2]^{\mathrm{T}}$ \\
    \hline 
    UE location $\boldsymbol{p}_{\mathrm{U}_i}$ & $[10, 5, 0]^{\mathrm{T}}$ and $[25, 10, 0]^{\mathrm{T}}$ \\
    \hline
    {Surfaces location} $\boldsymbol{p}_{\mathrm{R}_k}$ & $[0, 15, 3]^{\mathrm{T}}$ and $[15, 20, 3]^{\mathrm{T}}$ \\
    \hline
    Transmit antenna gain $G_{\mathrm{T}}$ & 2.5 \\
    \hline
    Receive antenna gain $G_{\mathrm{R}}$ & 2.5 \\
    \hline 
    The  boresight  gain at RIS $G_{\mathrm{c}}$ & $\pi$ \\
    \hline
    Exponential parameter $q$ in \eqref{eqnprp} & 0.57 \\
    \hline
    Channel coherent interval $T_{\mathrm{c}}$ & 1 ms \\
    \hline
    Total transmission power & 22 dBm \\
    \hline
    Noise factor & 5 dB \\
    \hline
    Noise power spectrum density & -169 dBm/Hz \\
    \hline \hline
    \end{tabular}
    \label{Tab1}
\end{table} 
\subsection{Scenario}
We consider an integrated localization and communication scenario where a single BS serves two UEs. The BS and UEs are equipped with URAs, with size $8\times 2$ and $2\times 2$, respectively. In the scenario under study, we place 2 RIS on the walls surrounding the UEs. Each RIS has a size of 0.42 m $\times$ 0.21 m and forms a 80 $\times$ 40 URA. The location of the BS is $\boldsymbol{p}_{\mathrm{B}} = [60, 15, 2]^{\mathrm{T}}$. The centers of two RIS URAs are located at $\boldsymbol{p}_{\mathrm{R}_1} = [0, 15, 3]^{\mathrm{T}}$ and $\boldsymbol{p}_{\mathrm{R}_2} = [15, 20, 3]^{\mathrm{T}}$. In Phase I, the UEs are located in $\boldsymbol{p}_{\mathrm{U}_{1,0}} = [10, 5, 0]^{\mathrm{T}}$ and $\boldsymbol{p}_{\mathrm{U}_{2,0}} = [25, 10, 0]^{\mathrm{T}}$ at time 0. The location uncertainty covariance matrix in \eqref{PosPri} is set to ${\boldsymbol{\Sigma}}_{i,0}^{\mathrm{pri}} = \mathrm{Diag}([2, 2, 0])$, i.e., the location uncertainty range is $\sqrt{\mathrm{Tr} ( {\boldsymbol{\Sigma}}_{i,0}^{\mathrm{pri}} ) } = 2$ meters.\footnote{This is the reported accuracy obtained with the fingerprinting-based indoor localization algorithm in \cite{IndoorLoc16}.} The other simulation parameters are given in Table \ref{Tab1}.\footnote{The NLOS components between the RIS and UE are follows a Rician distribution with independent and identically distributed fading, similar to \cite{ZhiPanRenWan21,Pan21,DaiZhuPanRenWan22}. However, the proposed framework can be applied to other fading models that account for the spatial correlation among the RIS elements if their inter-distances are smaller than half-wavelength \cite{BjoSan21}.}

\begin{figure}
    \centering
    \begin{subfigure}[t]{0.5\textwidth}      
    \centering 
    \includegraphics[width=\linewidth]{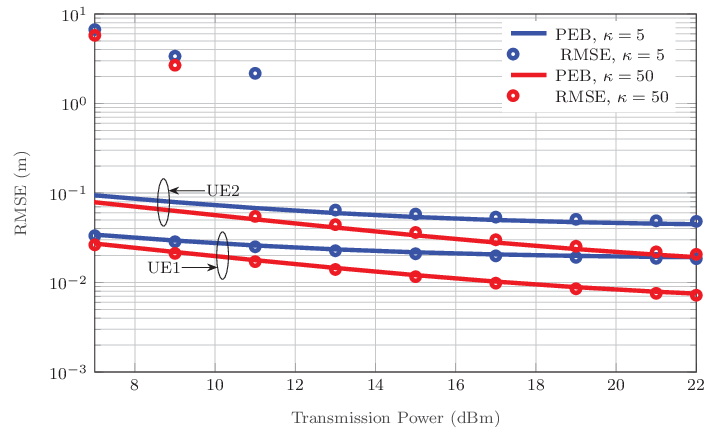}
    \caption{}
    \label{fig:PEB_Kappa}
    \end{subfigure}%
    \hfill
    \begin{subfigure}[t]{0.5\textwidth}
    \centering 
    \includegraphics[width=\linewidth]{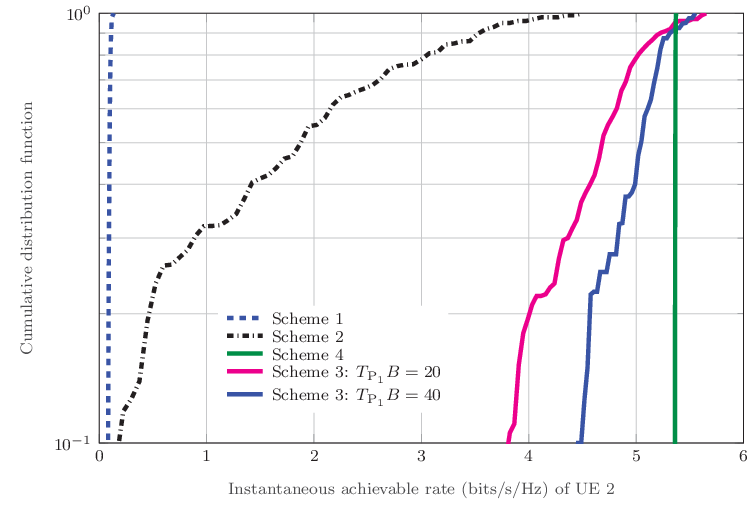}
    \caption{}
    \label{fig:cdf_AR_UE2}
    \end{subfigure}
    \caption{(a) Position error bounds and localization performance as a function of the Rician factor $\kappa$ (40 pilots are used); (b) The cumulative distribution function of the achievable rate for different RIS phase profiles.} \label{R1}
\end{figure}

For the mobility model, we assume that the UE moves only in the $x-y$ plane, with $ \boldsymbol{\Sigma}_{\text{pos}}=\mathrm{Diag}(\sigma_x^2, \sigma_y^2, 0)$. We consider that the maximum velocity of the UE is 1 m/s, which corresponds to a maximum Doppler shift $f_{\mathrm{D}}\approx 93$ Hz at the considered carrier frequency $f_c = 28$ GHz. As a result, we can set the channel coherent interval to $T_{\mathrm{C}}=1$ ms (as $f_{\mathrm{D}} T_{\mathrm{C}} \approx 0.093$), which accommodates 120 symbols for the considered bandwidth $B=120$ kHz.\footnote{In the present study, we assume that the localization phase is executed by using a single subcarrier. During the communication phase, on the other hand, we consider that multiple subcarriers are used for data transmission. Thus, the power is distributed uniformly over a bandwidth of 36 MHz (300 subcarriers). This means that the power per subcarrier is much lower during the communication phase (and the corresponding computation of the rate) as compared with the localization phase.} We assume $T_{\mathrm{L}} = 1$ s, i.e., 1000 channel coherent intervals are contained in one location coherent interval. 

\subsection{Location-coherent Optimization Phase}

\subsubsection{Location estimation performance} 
In Fig. \ref{fig:PEB_Kappa}, the position error bounds derived using \eqref{PosPhase1} are evaluated. Specifically, we see that the derived bounds are achievable with the used localization algorithm if sufficient transmission power is allocated. It is also apparent from Fig. \ref{fig:PEB_Kappa} that, if the Rician factor is sufficiently large, the position error bounds and the actual localization performance are improved. This is because the RIS-UE channel is more position dependent when the Rician factor is large. Accordingly, the advantages of the proposed localization scheme are more apparent in RIS optimization, channel estimation, and BS precoder optimization. For simplicity, we consider large Rician factors (e.g., 50) in most of the simulations. The good agreement between the position error bounds and the numerical estimates obtained by using a practical localization algorithm justifies the use of the FIM for RIS optimization, channel estimation, and BS precoder optimization.

\begin{algorithm}[tb]
\caption{RIS Optimization and Evaluation Procedure}\label{Alg1}
\begin{algorithmic} \footnotesize
\Require \parbox[t]{\dimexpr\linewidth- \algorithmicindent * 1}{${\boldsymbol{p}}_{\mathrm{U}_{i,0}}$, $\boldsymbol{\Sigma}_{i,0}$, $\boldsymbol{\Sigma}_{\mathrm{pos}}$, $\sigma_{k,i}^2$ and $\sigma_{\mathrm{B}, i}^2$ \strut}
\Repeat
\State Compute $\hat{\boldsymbol{p}}_{\mathrm{U}_{i,0}} = \boldsymbol{p}_{\mathrm{U}_{i,0}} + \boldsymbol{w}_i$ and  $\boldsymbol{p}_{\mathrm{U}_{i,\tau}} = 
\boldsymbol{p}_{\mathrm{U}_{i,0}} + \boldsymbol{v}_{i,\tau}$ (see Section \ref{sec_UELocEst});
\State \textbf{RIS optimization:}
\State 1. Obtain the optimal RIS phase profile $\mathcal{B}^{\text{opt}}$ according to Section \ref{sec_RISPhase}.
\State \textbf{Achievable rate evaluation at time $\tau$}:
\State 2. Generate the LOS component of the RIS-UE and BS-UE links based on the actual UE positions ${\boldsymbol{p}}_{\mathrm{U}_{i,\tau}}$;
\State 3. Generate the instances of the NLOS components of the RIS-UE and BS-UE links based on $\sigma_{k, i}^2$ and $\sigma_{\mathrm{B}, i}^2$, respectively; 
\State 4. Construct the channel $\boldsymbol{H}_{\mathrm{B}, i}$ based on the channel models in Section \ref{sec_model} and $\mathcal{B}^{\text{opt}}$;
\State 5. Compute the precoding matrices and the achievable rate for the given RIS phase profile $\mathcal{B}^{\text{opt}}$ by solving the optimization problem in \eqref{P:sumRateMax_pre}.
\Until{End (Monte Carlo)}
\end{algorithmic}
\end{algorithm}

\subsubsection{RIS optimization and rate evaluation} In Algorithm \ref{Alg1}, we summarize the procedure to optimize the RISs and to calculate the rate. Precisely, we initialize Algorithm \ref{Alg1} with the initial positions ${\boldsymbol{p}}_{\mathrm{U}_{i,0}}$ and the corresponding uncertainty covariance matrices $\boldsymbol{\Sigma}_{i,0}$ at time 0. Moreover, we set the fading parameters of the NLOS links. Then, several Monte Carlo iterations are run.
%Algorithm \ref{Alg1} is executed based on different location estimation study cases. 
In each run, we compute the estimated locations $\hat{\boldsymbol{p}}_{\mathrm{U}_{i,0}}$ of the UEs at time $0$ and the actual locations ${\boldsymbol{p}}_{\mathrm{U}_{i,\tau}}$ according to the random walk model described in Section \ref{sec_Framework}. Algorithm \ref{Alg1} is split in two parts. In the first, the RIS phase profiles are optimized as described in Section \ref{sec_RISPhase}. In the second, the channels are generated based on the actual positions of the users and the corresponding NLOS channel parameters. Finally, the precoding matrices and the achievable rate for the given RIS phase profiles $\mathcal{B}^{\text{opt}}$ are computed solving the optimization problem in \eqref{P:sumRateMax_pre}.

To demonstrate the advantages of localization in Phase I, we first analyze the stationary case in which the UEs do not move, i.e., ${\boldsymbol{p}}_{\mathrm{U}_{i,\tau}} = {\boldsymbol{p}}_{\mathrm{U}_{i,0}}$ and we compare the achievable rate of the following RIS optimization approaches:
\begin{itemize}
    \item \textit{Scheme 1}: random phase profile. In this case, the phases of the RISs are randomly chosen in $[0, 2\pi )$. Therefore, the RISs optimization in Algorithm \ref{Alg1} is not performed.
    \item \textit{Scheme 2}: the \emph{prior}-based approach. In this case, ${\boldsymbol{\Sigma}}_{i,0}$ is given by \eqref{PosPri}. 
    \item \textit{Scheme 3}: the proposed \emph{Phase I}-based approach. In this case, ${\boldsymbol{\Sigma}}_{i,0}$ is given by \eqref{PosPhase1}. 
    \item \textit{Scheme 4}: the two-timescale approach with ideal location estimation, i.e.,  ${\boldsymbol{\Sigma}}_{i,0} = {\boldsymbol{0}}$.
\end{itemize}

\begin{table*}%[!htb]
\centering
\caption{\textsc{Rate Performance with Different RIS Phase Profiles}}
\centering
    \begin{tabular}{c|c|c|c}
    \hline \hline
    \textbf{RIS Phase Profile} & \textbf{Sum Rate} (bits/s/Hz) & \textbf{Outage Rate: UE 1} (bits/s/Hz) & \textbf{Outage Rate: UE 2} (bits/s/Hz)\\
    \hline
    \textit{Scheme 1} & 0.46 & 0.78 & 0.08 \\
    \hline
    \textit{Scheme 2} & 2.46 & 0.27 & 0.18 \\
    \hline
    \textit{Scheme 3}: $T_{\mathrm{P}_1}B = 20$ & 4.79 & 4.48 & 3.81 \\
    \hline
    \textit{Scheme 3}: $T_{\mathrm{P}_1}B = 40$ & 4.96 & 4.56 & 4.47 \\
    \hline
    \textit{Scheme 4} & 5.43 & 5.5 & 5.36 \\
    \hline \hline
    \end{tabular}
    \label{TabSum}
\end{table*} 

In Table \ref{TabSum}, we show the average achievable rate for the four aforementioned schemes. Comparing \textit{Scheme 2} with \textit{Scheme 1}, it is clear that optimizing the RISs on the basis of even coarse UE localization information significantly outperforms the random phase configuration in terms of sum rate, which is consistent with the results in \cite{abrardo2020intelligent}. In addition, the proposed \emph{Phase I}-based RIS optimization scheme (\emph{Scheme 3}) leads to a significant performance improvement over \textit{Scheme 1} and \textit{Scheme 2}. Specifically, with the increase of $T_{\mathrm{P}_1}$ in Phase I, the average achievable rate increases as a consequence of the reduction of the UE location uncertainty. 
It is also shown in Table \ref{TabSum} that the proposed RIS optimization scheme approaches the optimal RIS phase profile configuration obtained by \emph{Scheme 4}. The marginal achievable rate improvement when $T_{\mathrm{P}_1}B = 40$ also indicates that we can use a small number of pilots for localization in Phase I, which saves resources for Phases II and III.

\begin{figure}
    \centering
    \includegraphics[width=\linewidth]{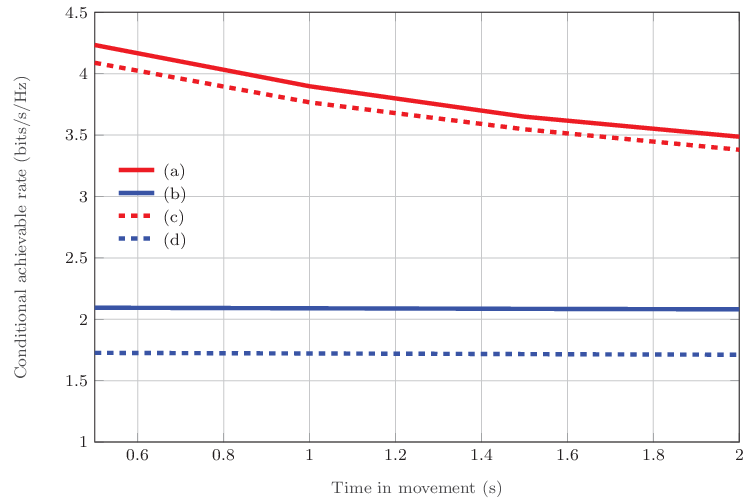}
    \caption{Achievable rate performance for mobile UEs. We assume that the UEs move in the $x$-direction. Curves (a), (b), (c), and (d) refer to \emph{Scheme 3}, \emph{Scheme 2}, \emph{Punctual - Phase I} and \emph{Punctual - Prior} schemes, respectively.} 
    \label{fig:Rate_Movement} 
    \vspace{-0.5cm}
\end{figure}

\subsubsection{Outage rate performance} Fig. \ref{fig:cdf_AR_UE2} shows the cumulative distribution function (CDF) of the achievable rate for different RIS optimization schemes. Thanks to the proposed localization-aided framework of Phase I, the location uncertainty is significantly reduced, and the CDF becomes steeper, i.e., larger rates can be supported with high probability (\emph{Scheme 3} in the figure). The achievable rates of the UE by assuming an outage probability equal to 0.1 are given in Table \ref{TabSum}. We see that the proposed RIS optimization scheme can significantly improve the outage rate.

\subsubsection{Achievable rate performance with UE movement} \label{subsubsec_mobile}
In the case of UE movement, the rate, as a function of $\tau$, of the following RIS optimization schemes is illustrated in Fig. \ref{fig:Rate_Movement}.
\begin{itemize}
  \item \emph{Scheme 2}: this is the same as Scheme 2 in Fig. \ref{fig:cdf_AR_UE2}, and it is indicated by (b) in Fig. \ref{fig:Rate_Movement}.
  \item \emph{Scheme 3}: this is the same as Scheme 3 in Fig. \ref{fig:cdf_AR_UE2}, and it is indicated by (a) in Fig. \ref{fig:Rate_Movement}. In this case, the number of pilots used in Phase I is 40. 
  \item \emph{Punctual - Phase I} (indicated by (c) in Fig. \ref{fig:Rate_Movement}). In this case, we consider Phase I with $T_{\mathrm{P}_1}B = 20$ for UE location estimation and the RIS is optimized without considering the location uncertainty, i.e., based on the \emph{punctual optimization} approach in Section \ref{sec_RISPhase}. 
  \item \emph{Punctual - Prior} (indicated by (d) in Fig. \ref{fig:Rate_Movement}). In this case, we consider the Prior scheme for UE location estimation and the RIS is optimized without considering the location uncertainty, i.e., based on the \emph{punctual optimization} approach described in Section \ref{sec_RISPhase}. 
\end{itemize}

Figure \ref{fig:Rate_Movement} shows that, owing to the increase of the location uncertainty due to the UE movements, the rate of the first two case studies degrades, indicating that Phase I needs to be implemented periodically to compensate for the movements of the UEs. The results in Fig. \ref{fig:Rate_Movement} justifies the choice of a location coherence interval equal to $T_{\mathrm{L}} = 1$s in the considered case. However, the rate degradation in case study (a) is not very significant during a single location coherence interval $T_{\mathrm{L}}$, which highlights that the proposed location-aided optimization scheme is robust to slow UE movements. From Fig. \ref{fig:Rate_Movement}, we also observe the significant performance gain of the proposed \emph{Phase I}-based fixed RIS optimization scheme over the \emph{prior}-based fixed RIS optimization scheme. 
Thus, the localization in Phase I is effective in the presence of mobile UEs, provided that the location coherence interval is optimized as a function of the mobility level of the UEs.

As for the curves (c) and (d) in Fig. \ref{fig:Rate_Movement}, we observe a clear degradation of the achievable rate. This is attributed to the misalignment between the obtained RIS profile and the actual positions of the UEs, which are different from those utilized for optimizing the RIS. When the location uncertainty is large, the probability of misalignment is more likely to occur, causing a more pronounced degradation of the achievable rate. The obtained results demonstrate that it is highly beneficial to account for the localization uncertainty when optimizing the RIS phase profiles.

In conclusion, the proposed location-aided approach for optimizing the RIS phase profile is a convenient solution, since the locations of the UE need to be estimated only every location coherence interval $T_{\mathrm{L}}$, thus reducing the overhead for configuring the RIS and the associated computational complexity. However, it is necessary that the location coherence interval $T_{\mathrm{L}}$ is adapted to the level of mobility of the UEs.

\begin{figure}
    \centering
    \includegraphics[width=\linewidth]{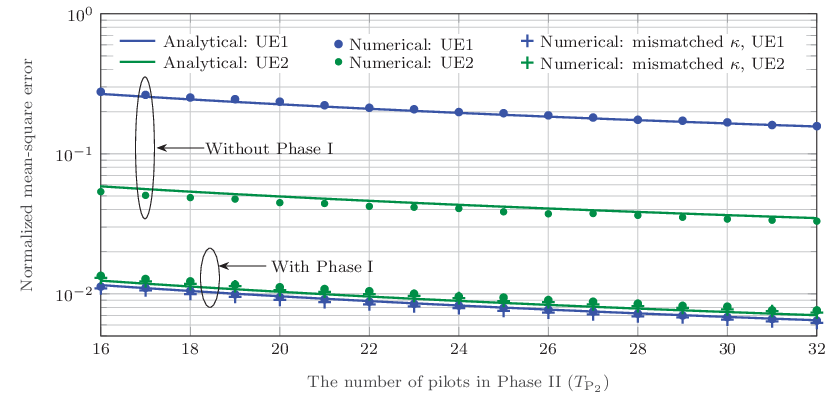}
    \caption{Channel estimation performance with a different number of pilots in Phase II. The number of pilots in Phase I is $T_{\mathrm{P}_1}B = 40$. The analytical results are obtained with \eqref{eqMMSEErr} and the Gaussian assumption is used for the derivation. The numerical results are obtained using the estimates from \eqref{eqLMMSE}. }\vspace{-5mm}
    \label{fig:CHest_GaussianAssumption}
\end{figure}

\subsection{Results for the Channel-coherent Optimization Phase}
\subsubsection{Channel estimation performance} We show the normalized mean-square error of the proposed channel estimation scheme in Fig.~\ref{fig:CHest_GaussianAssumption}. Specifically, the analytical channel estimation error is obtained as $\| \boldsymbol{E}_{i}^{(\mathrm{MMSE})} \|_{\mathrm{F}}^2/\|\boldsymbol{h}_{\mathrm{B}, i} ( \mathcal{B}, \boldsymbol{p}_{\mathrm{U}_i} ) \|^2$ where $\boldsymbol{E}_{i}^{(\mathrm{MMSE})}$ is given in \eqref{eqMMSEErr} and $\boldsymbol{h}_{\mathrm{B}, i} ( \mathcal{B}, \boldsymbol{p}_{\mathrm{U}_i} )$ is the exact channel vector. From Fig.~\ref{fig:CHest_GaussianAssumption}, we observe that the analytical formula of the channel estimation error matches well with the simulations. 

This observation justifies the Gaussian approximation for the channel estimation error, as detailed in Proposition \ref{prop1}. We also evaluate the channel estimation error when the locations of the UEs are estimated with traditional positioning techniques without executing Phase I. Both cases show a good agreement between the analytical derivation based on the Gaussian approximation and the simulations. In general, the Rician factor changes slowly when the environment is quasi static, and therefore can be estimated accurately. However, if the estimated Rician factor differs from the actual Rician factor, the performance of the proposed RIS optimization and channel estimation algorithms is negatively impacted. We analyze the sensitivity to the accurate estimation of the Rician factor in Fig.~\ref{fig:CHest_GaussianAssumption}. Specifically, we optimize the RIS by assuming a Rician factor equal to 50, while the actual Rician factor is equal to 40. In the considered case study, as can be seen from Fig.~\ref{fig:CHest_GaussianAssumption}, the mismatch of the Rician factor has a negligible impact on the accuracy of channel estimation. This result can be explained as follows. First, the performance of proposed localization algorithm does not change dramatically if the mismatch of the Rician factor is not too large. This is confirmed by the localization performance shown in Fig. \ref{fig:PEB_Kappa}. Second, since the proposed RIS optimization and channel estimation algorithms are performed by assuming a certain level of location uncertainty, the sensitivity to estimation errors of the Rician factor is reduced.

The performance gain obtained by using the proposed localization method in Phase I shows that UEs' localization, RIS optimization and channel estimation are intertwined: the localization has a beneficial effect on both RIS optimization and channel estimation. The good match of the analytical and numerical results indicates that we can use the analytical expression of the covariance of the channel estimation error to design the BS precoder, as per Section \ref{sec_PreOpt}.

\begin{figure}
    \centering
    \includegraphics[width=\linewidth]{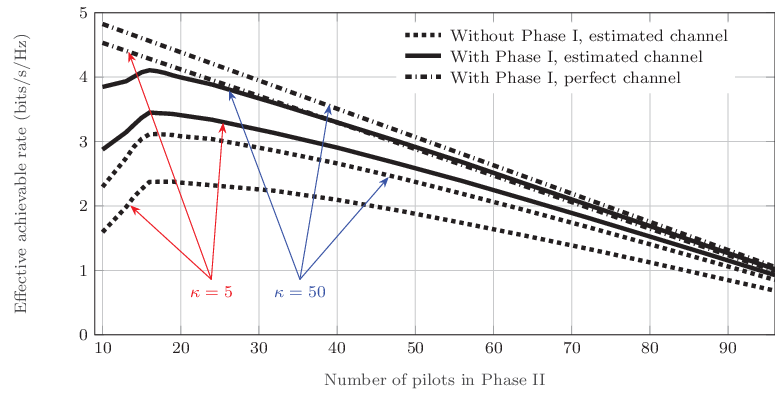}
    \caption{Effective rate of the proposed scheme. The number of pilots in Phase I is $T_{\mathrm{P}_1}B = 40$. The results are obtained using \eqref{eqARmmseCH} with different level of UE location statistics. Specifically, the proposed location-aided optimization scheme has integrated the localization in Phase I to provide better UE location information. If Phase I is not included, the UE location information is provided with traditional localization techniques. The perfect channel case is the upper bound of the proposed scheme since no channel estimation error is considered.}
    \label{fig:Rate_CHEST}
\end{figure}
\subsubsection{Effective achievable rate} We evaluate the effective achievable rate in Phase II, which is defined as $R_i^{\mathrm{eff}} (\mathcal{V}, \mathcal{B} ) = \eta R_i (\mathcal{V}, \mathcal{B} )$ where $R_i (\mathcal{V}, \mathcal{B} )$ is derived by solving the optimization problem in \eqref{P:sumRateMax_pre}, and $\eta = (T_{\mathrm{L}} - (T_{\mathrm{P}_1} + N_{\mathrm{C}} T_{\mathrm{P}_2}))/T_{\mathrm{L}} \approx (T_{\mathrm{C}} - T_{\mathrm{P}_2})/T_{\mathrm{C}}$ 
%\textcolor{blue}{Andrea: 120 is the number of OFDM symbols in each channel coherence time. Even if we have already specified it I would repeat it here} 
takes into account the overhead due to the CSI estimation. We have ignored $T_{\mathrm{P}_1}$ since $T_{\mathrm{P}_1} \ll T_{\mathrm{L}}$, in general. In Fig. \ref{fig:Rate_CHEST}, we report the effective achievable rate which is obtained by applying the optimal precoder design described in Section \ref{sec_PreOpt}. In addition, we use the conventional framework without Phase I as a benchmark. We see that the conventional channel estimation scheme has a significant performance loss compared to the proposed channel estimation scheme assisted by Phase I. The reasons are as follows. First, the localization executed in Phase I improves the RIS optimization, which creates favorable propagation channels for communication. This can be verified with the RIS optimization results in Fig.~\ref{fig:cdf_AR_UE2}. Also, the localization executed in Phase I improves the channel estimation performance, which helps to design better precoders in the presence of channel estimation error. This can be verified by the results in Fig. \ref{fig:CHest_GaussianAssumption}. The Rician factor of the RIS-UE channel has also an impact on the effective rate. Comparatively, a larger Rician factor indicates a stronger LOS path between the RISs and the UEs, hence the localization accuracy in Phase I improves. This is verified by the performance gains obtained for a larger Rician factor with respect to a smaller Rician factor, given the same number of pilots utilized in Phase II. 

The trade-off between the channel estimation accuracy and the effective achievable rate can be observed in Fig.~\ref{fig:Rate_CHEST}. To be specific, when fewer pilots are used in Phase II, the channel estimation error is large, causing the degradation of the effective achievable rate. When many pilots are used, the channel estimation error is significantly reduced, but the associated larger overhead degrades the effective achievable rate. Therefore, there is an optimal number of pilots to be used in Phase II in order to achieve the best effective achievable rate. From Fig. \ref{fig:Rate_CHEST}, when the Rician factor is 50 and the number of pilots in Phase I is 40, the optimal number of pilots in Phase II is 16 for the proposed scheme. If the Rician factor is 5, the optimal number of pilots in Phase II is 17. This indicates that more pilots are required to compensate the performance loss caused by the degradation of the localization and CSI estimation accuracy due to fading. 

Overall, from the results in Figs. \ref{R1} - \ref{fig:Rate_CHEST}, we conclude that a small number of pilots ($T_{\mathrm{P}_1}B + N_{\mathrm{C}}T_{\mathrm{P}_2}B$) is necessary during one location coherence interval to achieve the (near) optimal achievable rate in a multi-user and multi-RIS scenario, thanks to the proposed optimal RIS phase profile design, CSI estimation, and optimal precoder design. With the help of localization, we can significantly reduce the number of pilots required for RIS-assisted communications. Therefore, the proposed integrated localization and communication framework paves the way for an efficiency deployment and utilization of RISs in future wireless communications.

\section{Conclusions}\label{sec_conclusion}
In this paper, we have proposed an integrated localization and communication framework for applications to multi-user and multi-RIS wireless systems. We have proposed a new protocol consisting of three phases to perform localization, RIS optimization, CSI estimation, and precoder design. We have shown that, by integrating localization and communication, we can significantly improve the accuracy of RIS optimization based on estimates of the UE locations and the associated location uncertainty. Specifically, we have shown that the RIS phase profiles can be optimized  only sporadically based on the location coherence interval of the UEs, which is longer than the channel coherence interval for typical wireless applications. In addition, we have shown that the estimation of the CSI every channel coherent interval benefits from the proposed preceding location phase. Moreover, we have developed the optimal precoder design scheme that takes into account the instantaneous CSI and the associated estimation error. Extensive numerical results have demonstrated the effectiveness of the proposed approach.

\appendices
\section{Justification of the Gaussian Approximation} \label{GaussianAssumption}
We first obtain the channel vector $\boldsymbol{h}_{\mathrm{B}, i} ( \mathcal{B}, \boldsymbol{p}_{\mathrm{U}_i})$ as $\boldsymbol{h}_{\mathrm{B}, i} ( \mathcal{B}, \boldsymbol{p}_{\mathrm{U}_i}) =  \mathrm{vec} (\tilde{\boldsymbol{H}}_{\mathrm{B}, i}) + \sum_{k=1}^K \boldsymbol{h}_{k, i} (\boldsymbol{b}_k, \boldsymbol{p}_{\mathrm{U}_i})$. We then apply the Taylor series expansion of $\boldsymbol{h}_{\mathrm{B}, i} ( \mathcal{B}, \boldsymbol{p}_{\mathrm{U}_i})$ at a given $\boldsymbol{p}$ close to $\boldsymbol{p}_{\mathrm{U}_i}$, given by
\begin{align*}
& \boldsymbol{h}_{\mathrm{B}, i} ( \mathcal{B}, \boldsymbol{p}_{\mathrm{U}_i}) = \sum_{k=1}^K ( ( \boldsymbol{H}_{\mathrm{B}, k}^{\mathrm{T}} \mathrm{Diag} (\mathcal{B} ) \otimes \boldsymbol{I}_{N_{\mathrm{U}}} ) \mathrm{vec} (\tilde{\boldsymbol{H}}_{k, i}) \\ 
+ & \mathrm{vec} (\tilde{\boldsymbol{H}}_{\mathrm{B}, i}) + \bar{\boldsymbol{h}}_{k, i} (\mathcal{B}, \boldsymbol{p}) +  \left. \frac{\partial \bar{\boldsymbol{h}}_{k, i} (\mathcal{B}, \boldsymbol{p}_{\mathrm{U}_i})}{\partial \boldsymbol{p}_{\mathrm{U}_i}}\right|_{\boldsymbol{p}_{\mathrm{U}_i}=\boldsymbol{p}} (\boldsymbol{p}_{\mathrm{U}_i} - \boldsymbol{p}) \\
+ & O\left( (\boldsymbol{p}_{\mathrm{U}_i} - \boldsymbol{p})^2 \right).
\end{align*}
When $\sqrt{\mathrm{Tr} ( {\boldsymbol{\Sigma}}_{i,0} ) }$ is small, the chosen $\boldsymbol{p}$ is close to $\boldsymbol{p}_{\mathrm{U}_i}$; as a result, the higher orders $O\left( (\boldsymbol{p}_{\mathrm{U}_i} - \boldsymbol{p})^2 \right)$ can be reasonably ignored. The remaining first order approximation of $\boldsymbol{h}_{\mathrm{B}, i} ( \mathcal{B}, \boldsymbol{p}_{\mathrm{U}_i})$ consists of the summation of Gaussian components, which indicates that $\boldsymbol{h}_{\mathrm{B}, i} ( \mathcal{B}, \boldsymbol{p}_{\mathrm{U}_i})$ can be well approximated by a Gaussian distribution if $\sqrt{\mathrm{Tr} ( {\boldsymbol{\Sigma}}_{i,0} ) }$ is small.

\section{Proof of Proposition \ref{prop3}}\label{ProofProp3}
It can be verified that $\mathrm{E} \{ \tilde{\boldsymbol{n}}_i  \} = \boldsymbol{0}$. Since $\mathrm{E} \{ \boldsymbol{x}_i\boldsymbol{x}_j^{\mathrm{H}}\} = \boldsymbol{0}_{N_{\mathrm{U}}\times N_{\mathrm{U}}}$ for $i\ne j$, 
%and $\Delta \boldsymbol{h}_{\mathrm{B}, i}^{(\mathrm{MMSE})}$ is orthogonal to the MMSE channel estimate,
we have
\begin{align}
\tilde{J}_i = & \mathrm{E}   \{ \tilde{\boldsymbol{n}}_i \tilde{\boldsymbol{n}}_i^{\mathrm{H}}  \} = E_{\mathrm{s}} \mathrm{E}  \{ \Delta \boldsymbol{H}_{\mathrm{B}, i}^{(\mathrm{MMSE})} \boldsymbol{V}_i \boldsymbol{V}_i^{\mathrm{H}} ( \Delta \boldsymbol{H}_{\mathrm{B}, i}^{(\mathrm{MMSE})} )^{\mathrm{H}}  \} \nonumber \\
& \hspace{-3mm}+ E_{\mathrm{s}} \mathrm{E}  \{ \boldsymbol{H}_{\mathrm{B}, i} ( \sum_{j=1, j\ne i}^{N_{\mathrm{U}}} \boldsymbol{V}_j \boldsymbol{V}_j^{\mathrm{H}} ) \boldsymbol{H}_{\mathrm{B}, i}^{\mathrm{H}} \} + \sigma_n^2 \boldsymbol{I}_{N_{\mathrm{U}}}. \label{eqCoveqnoise}
\end{align}
Denoting $\Delta \boldsymbol{H}_{\mathrm{B}, i}^{(\mathrm{MMSE})} = [\Delta \boldsymbol{h}_{\mathrm{B}, i}^{(1)}, \Delta \boldsymbol{h}_{\mathrm{B}, i}^{(2)}, \cdots, \Delta \boldsymbol{h}_{\mathrm{B}, i}^{(N_{\mathrm{B}})}]$, the error covariance matrix in \eqref{eqMMSEErr} is
\begin{align}
\boldsymbol{E}_i^{(\mathrm{MMSE})} = \left[ 
\begin{array}{cccc}
\boldsymbol{E}_i^{(1, 1)} & \boldsymbol{E}_i^{(1, 2)} & \cdots & \boldsymbol{E}_i^{(1, N_{\mathrm{B}})}  \\
\boldsymbol{E}_i^{(2,1)} & \boldsymbol{E}_i^{(2,2)} & \cdots & \boldsymbol{E}_i^{(2,N_{\mathrm{B}})} \\
\vdots & \vdots & \ddots & \vdots \\
\boldsymbol{E}_i^{(N_{\mathrm{B}},1)} & \boldsymbol{E}_i^{(N_{\mathrm{B}},2)} & \cdots & \boldsymbol{E}_i^{(N_{\mathrm{B}},N_{\mathrm{B}})}
\end{array}
\right], \notag % \label{eqEmn}
\end{align}
where $\boldsymbol{E}_i^{(m,n)} = \mathrm{E} \{ \Delta \boldsymbol{h}_{\mathrm{B}, i}^{(m)} ( \Delta \boldsymbol{h}_{\mathrm{B}, i}^{(n)} )^{\mathrm{H}} \}$.
Let $\boldsymbol{\Pi}_i = \boldsymbol{V}_i \boldsymbol{V}_i^{\mathrm{H}}$ where
\begin{align*}
\boldsymbol{\Pi}_i = \left[ 
\begin{array}{cccc}
\Pi_i^{(1,1)} & \Pi_i^{(1,2)} & \cdots & \Pi_i^{(1,N_{\mathrm{B}})}  \\
\Pi_i^{(2,1)} & \Pi_i^{(2,2)} & \cdots & \Pi_i^{(2,N_{\mathrm{B}})} \\
\vdots & \vdots & \ddots & \vdots \\
\Pi_i^{(N_{\mathrm{B}},1)} & \Pi_i^{(N_{\mathrm{B}},2)} & \cdots & \Pi_i^{(N_{\mathrm{B}},N_{\mathrm{B}})}
\end{array}
\right],
\end{align*}
Thus, we obtain
\begin{align*}
& \hspace{-10mm}\mathrm{E} \left\{  \Delta \boldsymbol{H}_{\mathrm{B}, i}^{(\mathrm{MMSE})} \boldsymbol{V}_i \boldsymbol{V}_i^{\mathrm{H}} \left( \Delta \boldsymbol{H}_{\mathrm{B}, i}^{(\mathrm{MMSE})} \right)^{\mathrm{H}} \right\}\\
= & \sum_{m=1}^{N_{\mathrm{B}}} \sum_{n=1}^{N_{\mathrm{B}}} \Pi_i^{(m,n)} \mathrm{E} \left\{ \Delta \boldsymbol{h}_{\mathrm{B}, i}^{(m)} \left( \Delta \boldsymbol{h}_{\mathrm{B}, i}^{(n)} \right)^{\mathrm{H}}  \right\} \\
= & \sum_{m=1}^{N_{\mathrm{B}}} \sum_{n=1}^{N_{\mathrm{B}}} \Pi_i^{(m,n)} \boldsymbol{E}_i^{(m, n)}. \label{eqfirComp} \numberthis
\end{align*}
In addition, $\boldsymbol{R}_{\mathrm{B}, i}(\mathcal{B})$ can be expressed as
\begin{align*}
\boldsymbol{R}_{\mathrm{B}, i}(\mathcal{B})
= & \left[ 
\begin{array}{cccc}
\boldsymbol{R}_{\mathrm{B}, i}^{(1, 1)} & \boldsymbol{R}_{\mathrm{B}, i}^{(1, 2)} & \cdots & \boldsymbol{R}_{\mathrm{B}, i}^{(1, N_{\mathrm{B}})} \\
\boldsymbol{R}_{\mathrm{B}, i}^{(2, 1)} & \boldsymbol{R}_{\mathrm{B}, i}^{(2, 2)} & \cdots & \boldsymbol{R}_{\mathrm{B}, i}^{(2, N_{\mathrm{B}})} \\
\vdots & \vdots & \ddots & \vdots \\
\boldsymbol{R}_{\mathrm{B}, i}^{(N_{\mathrm{B}}, 1)} & \boldsymbol{R}_{\mathrm{B}, i}^{(N_{\mathrm{B}}, 2)} & \cdots & \boldsymbol{R}_{\mathrm{B}, i}^{(N_{\mathrm{B}},, N_{\mathrm{B}})}
\end{array}
\right],
\end{align*}
where $\boldsymbol{R}_{\mathrm{B}, i}^{(m, n)} = \mathrm{E}  \{ \boldsymbol{h}_{\mathrm{B}, i}^{(m)} (\mathcal{B}, \boldsymbol{p}_{\mathrm{U}_i}) (\boldsymbol{h}_{\mathrm{B}, i}^{(m)} (\mathcal{B}, \boldsymbol{p}_{\mathrm{U}_i}))^{\mathrm{H}}   \}$. The second component in \eqref{eqCoveqnoise} will be
\begin{align*}
& \hspace{-10mm} E_{\mathrm{s}} \mathrm{E}  \left\{ \boldsymbol{H}_{\mathrm{B}, i} \left( \sum_{j=1, j\ne i}^{N_{\mathrm{U}}} \boldsymbol{V}_j \boldsymbol{V}_j^{\mathrm{H}} \right) \boldsymbol{H}_{\mathrm{B}, i}^{\mathrm{H}} \right\} \\ 
= & E_{\mathrm{s}} \sum_{m=1}^{N_{\mathrm{B}}} \sum_{n=1}^{N_{\mathrm{B}}} \left( \sum_{j=1, j\ne i}^{N_{\mathrm{U}}} \Pi_j^{(m,n)} \right) \boldsymbol{R}_{\mathrm{B}, i}^{(m, n)}. \label{eqsecComp} \numberthis
\end{align*}
Inserting \eqref{eqfirComp} and \eqref{eqsecComp} into \eqref{eqCoveqnoise}, we have $\tilde{J}_i$ given in 
\eqref{eqtildeJi}. This completes the proof.
\bibliography{RIS_Integrated_Localization_communications}

\end{document}